\documentclass[11pt]{article}


\usepackage{titling}

\setlength{\droptitle}{-3em}

\usepackage{amsmath,amssymb,amsthm,verbatim}
\usepackage{mathabx}
\usepackage{caption}
\usepackage{color}
\usepackage{palatino}
\usepackage{mathpazo}
\usepackage{stmaryrd}
\usepackage[margin=1in]{geometry}
\usepackage{mathtools}
\mathtoolsset{centercolon}
\usepackage{thm-restate}
\usepackage{hyperref}
\definecolor{darkred}  {rgb}{0.5,0,0}
\definecolor{darkblue} {rgb}{0,0,0.5}
\definecolor{darkgreen}{rgb}{0,0.5,0}
\hypersetup{
  pdftitle = {Cryptographic Reversibility Notes},
  pdfauthor = {Eric Chitambar},
  colorlinks = true,
  urlcolor  = blue,         
  linkcolor = darkblue,     
  citecolor = darkgreen,    
  filecolor = darkred       
}

\newtheorem{theorem}{Theorem}
\newtheorem*{theorem*}{Theorem}
\newtheorem{proposition}{Proposition}
\newtheorem*{proposition*}{Proposition}
\newtheorem{lemma}{Lemma}

\newtheorem{corollary}{Corollary}
\theoremstyle{definition}

\newtheorem*{remark}{Remark}
\newtheorem*{example}{Example}





\newcommand{\mc}[1]{\mathcal{#1}}

\newcommand{\KAB}{\overrightarrow{K}(X:Y||Z)}
\newcommand{\KBA}{\overleftarrow{K}(X:Y||Z)}
\newcommand{\KCR}{K^{c.r.}(X:Y||Z)}
\setcounter{tocdepth}{2} 

\title{Distributions Attaining Secret Key at a Rate of the Conditional Mutual Information}

\author{Eric Chitambar$^1$, \quad Ben Fortescue$^1$, \quad  Min-Hsiu Hsieh$^2$
\\[4mm]
\textit{$^1$ Department of Physics and Astronomy, Southern Illinois University,}\\ 
\textit{Carbondale, Illinois 62901, USA}\\
\textit{$^2$ Centre for Quantum Computation \& Intelligent Systems (QCIS),}\\
\textit{Faculty of Engineering and Information Technology (FEIT),}\\
\textit{University of Technology Sydney (UTS), NSW 2007, Australia}}

\date{\today}

\begin{document}
\maketitle

\vspace{-1cm}

\begin{abstract}
In this paper we consider the problem of extracting secret key from an eavesdropped source $p_{XYZ}$ at a rate given by the conditional mutual information.  We investigate this question under three different scenarios: (i) Alice ($X$) and Bob ($Y$) are unable to communicate but share common randomness with the eavesdropper Eve ($Z$), (ii) Alice and Bob are allowed one-way public communication, and (iii) Alice and Bob are allowed two-way public communication.  Distributions having a key rate of the conditional mutual information are precisely those in which a ``helping'' Eve offers Alice and Bob no greater advantage for obtaining secret key than a fully adversarial one.  For each of the above scenarios, strong necessary conditions are derived on the structure of distributions attaining a secret key rate of $I(X:Y|Z)$.  In obtaining our results, we completely solve the problem of secret key distillation under scenario (i) and identify $H(S|Z)$ to be the optimal key rate using shared randomness, where $S$ is the G\'{a}cs-K\"{o}rner Common Information.   We thus provide an operational interpretation of the conditional G\'{a}cs-K\"{o}rner Common Information.  Additionally, we introduce simple example distributions in which the rate $I(X:Y|Z)$ is achievable if and only if two-way communication is allowed.  
\end{abstract}
\vspace{-.375cm}

\section{Introduction}

A basic information-processing task involves the exchange of secret information between Alice ($X$) and Bob ($Y$) in the presence of an eavesdropper, Eve ($E$).  If Alice and Bob have some pre-established key that is secret from Eve, then any future message $M$ can be transmitted using the key as a one-time pad.  Thus, the problem of private communication can be reduced to the problem of \textit{secret key distillation}, which studies the extraction of secret key $\Phi_{XY}\cdot q_Z$ from some initial tripartite correlation $p_{XYZ}$.  Here, $\Phi_{XY}$ is a perfectly correlated bit and $q_Z$ is an arbitrary distribution.  Often, the correlations $p_{XYZ}$ are presented as a many-copy source $p_{XYZ}^n$, and Alice and Bob wish to know the optimal rate of secret bits per copy that they can distill from this source.     

It turns out that Alice and Bob can often enhance their distillation capabilities by openly disclosing some information about $X$ and $Y$ through public communication \cite{Ahlswede-1993a, Maurer-1993a}.  In general, Alice and Bob's communication schemes can be interactive with one round of communication depending on what particular messages were broadcasted in previous rounds.  Such interactive protocols are known to generate higher key rates than non-interactive protocols, at least in the absence of ``noisy'' local processing by Alice and Bob \cite{Maurer-1993a}.  Thus, for a given distribution $p_{XYZ}$, one obtains a hierarchy of key rates pertaining to the respective scenarios of no communication, one-way communication, and two-way (interactive) communication.  It is also possible to consider no-communication scenarios in which Alice and Bob have access to some publically shared randomness that is uncorrelated with their primary source $p_{XYZ}$.  Clearly publically shared randomness is a weaker resource than public communication since the latter is able to generate the former.  However, below we will prove even stronger that publically shared randomness offers no advantage whatsoever for secret key distillation.  

For the one-way communication scenario, a single-letter characterization of the key rate has been proven by Ahlswede and Csisz\'{a}r \cite{Ahlswede-1993a}.  When the unidirectional communication is from Alice to Bob, we denote the key rate by $\KAB$, while $\KBA$ denotes the rate when communication is from Bob to Alice only.  No formula is known for the two-way key rate of a given distribution, which we denote by $K(X:Y||Z)$, and the complexity of protocols utilizing interactive communication makes computing this a highly challenging open problem.  

In the special case of an uncorrelated Eve in $p_{XYZ}$, the key rate is given by the mutual information $I(X:Y)$, and this can be achieved using one-way communication.  For more general distributions in which Eve possesses some side information of $XY$, the conditional mutual information $I(X:Y|Z)$ is a known upper bound for the key rate under two-way communication \cite{Ahlswede-1993a, Maurer-1993a}.  In general this bound is not tight \cite{Maurer-1999a}.  Rather, the conditional mutual information quantifies the key rate when Eve helps Alice and Bob by broadcasting her variable $Z$.  Key obtained by a helping Eve is also known as \textit{private key} \cite{Csiszar-2000a}, and private key is still secret from Eve even though she helps Alice and Bob obtain it.  The relevance of private key naturally arises in situations where Eve functions as a central server who helps establish secret correlations between Alice and Bob.  Thus, distributions with a secret key rate equaling the private key rate of $I(X:Y|Z)$  are precisely those in which nothing is gained by a helping Eve.

The objective of this paper is to investigate the types of distributions for which $I(X:Y|Z)$ is indeed an achievable secret key rate.  This will be considered under the scenarios of (i) publically shared randomness but no communication, (ii) one-way communication, and (iii) two-way communication.  A full solution to the problem would involve a structural characterization of the distributions $p_{XYZ}$ whose key rates are $I(X:Y|Z)$.  We are able to fully achieve this only for the no-communication setting, but we nevertheless derive strong necessary conditions for both the one-way and the two-way scenarios.  In the case of one-way communication, our condition makes use of the key-rate formula derived by Ahlswede and Csisz\'{a}r.  For the statement of this formula, recall that three variables $A$, $B$, and $C$ satisfy the Markov chain $A-B-C$ if $C$ is conditionally independent of $A$ given $B$; i.e. $p(c|b,a)=p(c|b)$ for letters in the range of $A$, $B$, and $C$.  Then,
\begin{lemma}[\cite{Ahlswede-1993a}]
\label{Lem:AC-Lemma}
For distribution $p_{XYZ}$, 
\begin{equation}
\label{Eq:One-way Rate}
\KAB=\max_{KU|XYZ} I(K:Y|U)-I(K:Z|U),
\end{equation}
where the maximization is taken over all auxiliary variables $K$ and $U$ satisfying the Markov chain $KU-X-YZ$, with $K$ and $U$ ranging over sets of size no greater than $|\mc{X}|+1$.  In particular,
\begin{equation}
\overrightarrow{K}(X:Y||Z)\geq I(X:Y)-I(X:Z).
\end{equation}
\end{lemma}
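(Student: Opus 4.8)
Because \eqref{Eq:One-way Rate} is precisely the Ahlswede--Csisz\'{a}r one-way secret-key capacity theorem \cite{Ahlswede-1993a}, the plan is to recall its two standard halves and then read the displayed bound off as a trivial instance. For \emph{achievability}, fix test channels producing $(K,U)$ from $X$ with $KU-X-YZ$, and have Alice apply $p_{KU|X}$ coordinatewise to her observed $X^n$. Using superposition random coding together with Wyner--Ziv/Slepian--Wolf binning against Bob's side information $Y^n$, Alice publicly broadcasts a bin index for $U^n$ and then, within the $U^n$-conditional codebook, a bin index for $K^n$; a joint-typicality argument shows Bob recovers $(U^n,K^n)$ with vanishing error once these indices have rates slightly above $I(X:U)-I(Y:U)$ and $I(X:K|U)-I(Y:K|U)$ respectively. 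Alice and Bob then run privacy amplification (a universal hash) on $K^n$ against Eve, who is granted $Z^n$ \emph{and} every broadcast index (and, conservatively, $U^n$ itself). Working conditionally on the strongly typical $U^n$, so that the residual source behaves like the $U$-conditional i.i.d.\ source, a leftover-hash/typicality estimate shows the hashed key is near-uniform and near-independent of Eve's view at a rate equal to the raw rate $I(K:X|U)$ carried by $K^n$, minus the $I(K:X|U)-I(K:Y|U)$ bits spent so Bob can decode, minus the $I(K:Z|U)$ bits that $Z^n$ already determines; this telescopes to $I(K:Y|U)-I(K:Z|U)$, as claimed.

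For the \emph{converse}, take any protocol with a single forward message $F$ producing a key of rate $R$. Fano's inequality applied to Bob's recovery of the key, together with the secrecy constraint $I(\mathrm{key};FZ^n)\to 0$, lets one upper bound $nR$ by a single-letter sum $\sum_i\bigl(I(K_i:Y_i|U_i)-I(K_i:Z_i|U_i)\bigr)$ once one identifies $U_i$ as the message bundled with an appropriate past/future of the observations and $K_i$ as the refinement of $U_i$ by $X_i$, these identifications being chosen so that $K_iU_i-X_i-Y_iZ_i$ holds for every $i$. A uniform time-sharing variable absorbs the index, and a Fenchel--Eggleston/Carath\'{e}odory argument then trims the ranges of $K$ and $U$ down to size at most $|\mc{X}|+1$ without decreasing the objective, yielding exactly the maximization in \eqref{Eq:One-way Rate}. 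Finally, the displayed inequality is immediate: in \eqref{Eq:One-way Rate} choose $U$ constant and $K=X$; then $KU-X-YZ$ holds trivially, $K$ ranges over $\mc{X}$ (within the allowed cardinality), and the objective equals $I(X:Y)-I(X:Z)$, so $\KAB$ is at least this value.

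The only genuinely delicate step is the achievability bookkeeping: one must choose the two binning rates and the hash length so that \emph{simultaneously} Bob decodes $U^n$ and then $K^n$ reliably and the hash output is $\varepsilon$-close to uniform \emph{and} decoupled from $(Z^n,F,U^n)$, with every rate term collapsing to a conditional mutual information and no residual slack. Pushing everything through the conditional-typicality lemmas for the $U$-superposition layer is what makes the accounting close cleanly; by contrast, the converse's single-letterization and the cardinality reduction are entirely routine.
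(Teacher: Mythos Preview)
The paper does not actually prove this lemma: it is stated with the citation \cite{Ahlswede-1993a} and used as a black box throughout. The only part the paper implicitly derives is the ``in particular'' inequality, and there your approach coincides exactly with what the paper intends---set $U$ constant and $K=X$, so that $KU-X-YZ$ holds trivially and the objective becomes $I(X:Y)-I(X:Z)$.

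Your sketch of the full Ahlswede--Csisz\'{a}r achievability/converse is therefore \emph{more} than the paper itself supplies, and it is a faithful outline of the standard argument (superposition coding plus binning plus privacy amplification for achievability; Fano plus single-letterization plus Carath\'{e}odory for the converse). There is nothing substantive to compare, since the paper defers entirely to the original source.
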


In this paper, we consider when variables $KU$ can be found that satisfy both $KU-X-YZ$ and $I(K;Y|U)-I(K;Z|U)=I(X:Y|Z)$.  Theorem \ref{Thm:One-way-conditions1} below offers a necessary condition on the structure of distributions for which this is possible.  Turning to the scenario of two-way communication, we utilize the well-known intrinsic information upper bound on $K(X:Y||Z)$.  For distribution $p_{XYZ}$, its intrinsic information is given by
\begin{equation}
\label{Eq:IntrinsicInfo}
I(X:Y\downarrow Z):=\min_{\overline{Z}|Z} I(X:Y|\overline{Z})
\end{equation}
where the minimization is taken over over all auxiliary variables $\overline{Z}$ satisfying $XY-Z-\overline{Z}$, with $\overline{Z}$ having the same range as $Z$ \cite{Christandl-2003a}.  Thus, the intrinsic information is the smallest conditional mutual information achievable after Eve processes her variable $Z$.  The intrinsic information satisfies $K(X:Y||Z)\leq I(X:Y\downarrow Z)$.  In Theorem \ref{Thm:DistOpt1} below, we identify a large class of distributions for which a channel $\overline{Z}|Z$ can be found satisfying $I(X:Y|\overline{Z})<I(X:Y|Z)$.  This allows us to derive a necessary condition on distributions having $K(X:Y||Z)=I(X:Y|Z)$.  

A brief summary of our results is the following:
\begin{itemize}
\item For publically shared randomness with no communication, we identify $H(J_{XY}|Z)$ as the secret key rate, where $J_{XY}$ is the G\'{a}cs-K\"{o}rner Common Information of Alice and Bob's marginal distribution $p_{XY}$.  Moreover, this rate is achievable without using shared randomness.  Using this result, the structure of distributions attaining $I(X:Y|Z)$ can easily be characterized.

\item When one-way communication is permitted between Alice and Bob, we show that the distribution $p_{XYZ}$ must satisfy a certain ``block-like'' structure in order to obtain the key rate $I(X:Y|Z)$.  Specifically, given some outcome $z$ of Eve, if there exists collections of events $\mc{X}_0$ and $\mc{Y}_0$ for Alice and Bob respectively that satisfy $p(\mc{Y}_0|\mc{X}_0,z)=p(\mc{X}_0|\mc{Y}_0,z)=1$, then $p(\mc{Y}_0|\mc{X}_0)=p(\mc{X}_0|\mc{Y}_0)=1$; i.e. the conditional probabilities hold regardless of Eve's outcome. 

\item For key distillation with two-way communication, we show that distributions attaining a key rate of $I(X:Y|Z)$ must also satisfy a certain type of uniformity similar to the one-way case.  One special class of distributions our necessary condition applies to are those obtained by mixing a perfectly correlated distribution $p_{XY}$ with an uncorrelated one such that the marginal distributions have the same range and such that Eve's variable $Z$ specifies which one of the distributions Alice and Bob hold.  We show that unless either Alice or Bob can likewise identify the distribution from his or her variable, a key rate of $I(X:Y|Z)$ is unattainable.

\item We construct distributions in which a distillation rate of $I(X:Y|Z)$ is unachievable when the communication is restricted from Alice to Bob, and yet it becomes achievable if the communication direction is from Bob to Alice.  We further provide an example when $I(X:Y|Z)$ is achievable only if two-way communication is used.  To our knowledge, these are the first known examples rigorously demonstrating such communication dependency for optimal key distillation.  We then turn to the difference between single-party key extraction versus shared key extraction by public communication.  We completely characterize the distributions in which the latter can be accomplished at the same rate as the former.
\end{itemize}

Before presenting these results in greater detail, we begin in Section \ref{Sect:Defns} with a more precise overview of the key rates studied in this paper.  In Section \ref{Sect:Gacs-Korner}, we then present the G\'{a}cs-K\"{o}rner Common Information and prove some basic properties.  Section \ref{Sect:Results} contains our main results, with longer proofs postponed to the appendix.  Finally, Section \ref{Sect:Conclusions} offers some concluding remarks.

\section{Definitions}

\label{Sect:Defns}

Let us review the relevant definitions of secret key rate under various communication scenarios.  We consider random variables $X$, $Y$ and $Z$ ranging over finite alphabets $\mc{X}$, $\mc{Y}$, and $\mc{Z}$ respectively.  For a general distribution $q$, we say its support (denoted by $supp[q]$) is the collection of $x$ such that $q(x)>0$.   In all distillation tasks, we assume that Alice and Bob each have access to one part of an i.i.d. (identical and independently distributed) source $XYZ$ whose distribution is $p_{XYZ}$.  Hence, after $n$ realizations of the source, $X^n$, $Y^n$ and $Z^n$ belong to Alice, Bob, and Eve respectively.  In addition, Alice and Bob each possess a local random variable, $Q_A$ and $Q_B$ respectively, which are mutually independent from each other and from $X^nY^nZ^n$.  This allows them to introduce local randomness into their processing of $X^nY^n$.  

We first turn to the most restrictive scenario, which is key distillation using publicly shared randomness.  The \textit{common randomness (c.r.) key rate} of $X$, $Y$, and $Z$, denoted by $K^{c.r.}(X:Y||Z)$, is defined to be the largest $R$ such that for every $\epsilon>0$, there is an integer $N$ such that $n\geq N$ implies the existence of (a) a random variable $W$ independent of $X^nY^nZ^n$ and ranging over some set $\mc{W}$, (b) a random variable $K$ ranging over some set $\mc{K}$, and (c) a pair of mappings $f(X^n,Q_A,W)$ and $g(Y^n,Q_B,W)$ for which
\begin{enumerate}
\item[(i)] $Pr[f=g=K]>1-\epsilon$;
\item[(ii)] $\log|\mc{K}|-H(K|Z^nW)<\epsilon$;
\item[(iii)] $\frac{1}{n}\log|\mc{K}|\geq R$.
\end{enumerate}

We next move to the more general scenario of when Alice and Bob are allowed to engage in public communication.  A \textit{local operations and public communication} (LOPC) protocol consists of a sequence of public communication exchanges between Alice and Bob.  The $i^{th}$ message exchanged between them is described by the variable $M_i$.  If Alice (resp. Bob) is the broadcasting party in round $i$, then $M_i$ is a function of $X^n$ and $Q_A$ (resp. $Y^n$ and $Q_B$) as well as the previous messages $(M_1,M_2,\cdots,M_{i-1})$.  The protocol is one-way if there is only one round of a message exchange.  

For distribution $p_{XYZ}$, the \textit{Alice-to-Bob secret key rate} $\KAB$ is the largest $R$ that satisfies the above three conditions except with $W$ being replaced by some message $M$ that is generated by Alice and therefore a function of ($X^n$, $Q_A$).  We can likewise define the Bob-to-Alice key rate $\KBA$.  The \textit{(two-way) secret key rate} of $X$ and $Y$ given $Z$, denoted by $K(X:Y||Z)$, is defined analogously except with $M=(M_1,M_2,\cdots,M_{r})$ being any random variable generated by an LOPC protocol \cite{Maurer-1993a, Ahlswede-1993a}.  The key rates satisfy the obvious relationship:
\begin{equation}
K^{c.r.}(X:Y||Z)\leq\{\KAB,\KBA\} \leq K(X:Y||Z).
\end{equation}

\section{The G\'{a}cs-K\"{o}rner Common Information}

\label{Sect:Gacs-Korner}

In this section, we introduce the G\'{a}cs-K\"{o}rner Common Information.  For every pair of random variables $XY$, there exists a \textit{maximal} common variable $J_{XY}$ in the sense that $J_{XY}$ is a function of both $X$ and $Y$, and any other such common function of both $X$ and $Y$ is itself a function of $J_{XY}$.  Hence, up to relabeling, the variable $J_{XY}$ is unique for each distribution $p_{XY}$.   In terms of its structure, a distribution $p_{XY}$ can always be decomposed as
\begin{equation}
\label{Eq:CommonInfoDecomp}
p(x,y)=\sum_{J_{XY}=j}p(x,y|j)p(j),
\end{equation}
where for any $x,x'\in\mc{X}$ and $y,y'\in\mc{Y}$, the conditional distributions satisfy $p(x,y|j)p(x,y'|j')=0$ and $p(x,y|j)p(x',y|j')=0$ if $j\not=j'$.  G\'{a}cs and K\"{o}rner identify $H(J_{XY})$ as the common information of $XY$ \cite{Gacs-1973a}.

It is instructive to rigorously prove the statements of the preceding paragraph.  A \textit{common partitioning of length $t$} for $XY$ are pairs of subsets $(\mc{X}_i,\mc{Y}_i)_{i=1}^t$ such that 
\begin{itemize}
\item[(i)] $\mc{X}_i\cap\mc{X}_j=\mc{Y}_i\cap \mc{Y}_j=\emptyset$ for $i\not=j$, 
\item[(ii)] $p(\mc{X}_i|\mc{Y}_j)=p(\mc{Y}_i|\mc{X}_j)=\delta_{ij}$, and  
\item[(iii)] if $(x,y)\in\mc{X}_i\times \mc{Y}_i$ for some $i$, then $p_X(x)p_Y(y)>0$. 
\end{itemize}
For a given common partitioning, we refer to the subsets $\mc{X}_i\times \mc{Y}_i$ as the ``blocks'' of the partitioning.  The subscript $i$ merely serves to label the different blocks, and for any fixed labeling, we associate a random variable $C(X,Y)$ such that $C(x,y)=i$ if $(x,y)\in\mc{X}_i\times\mc{Y}_i$.  Note that each party can determine the value of $J$ from their local information, and it is therefore called a \textit{common function} of $X$ and $Y$.  A \textit{maximal common partitioning} is a common partitioning of greatest length.  The following proposition is proven in the appendix.  
\begin{proposition}
\label{Prop:Partition-Unique}
\begin{itemize}
\item[{}]
\item[(a)]  Every pair of finite random variables $XY$ has a unique maximal common partitioning, which we denote by $J_{XY}$,
\item[(b)]  Variable $J_{XY}$ satisfies
\[H(J_{XY})=\max_K\{H(K):0=H(K|X)=H(K|Y)\}\]
iff $J_{XY}$ is a common function for the maximal common partitioning of $XY$.
\item[(c)] If $f(X)=g(Y)=C$ is any other common function of $X$ and $Y$, then $C(J_{XY})$.
\end{itemize}

\end{proposition}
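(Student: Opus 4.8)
The plan is to prove the three parts of Proposition~\ref{Prop:Partition-Unique} in sequence, with part~(a)—uniqueness of the maximal common partitioning—being the technical heart of the argument, and parts~(b) and~(c) following as relatively direct consequences.

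For part~(a), I would first argue that the ``refinement'' relation between common partitionings gives the collection of all common partitionings the structure of a lattice, or at least that any two common partitionings have a common refinement that is itself a common partitioning. Concretely, given two common partitionings $(\mc{X}_i,\mc{Y}_i)_{i=1}^s$ and $(\mc{X}'_j,\mc{Y}'_j)_{j=1}^t$, I would consider the nonempty intersections $\mc{X}_i\cap\mc{X}'_j$ paired with $\mc{Y}_i\cap\mc{Y}'_j$. The key claim is that these pairs (discarding empty ones) again form a common partitioning: disjointness in condition~(i) is immediate, and condition~(iii) is inherited. The work is in verifying condition~(ii), that $p(\mc{X}_i\cap\mc{X}'_j\mid \mc{Y}_k\cap\mc{Y}'_\ell)=\delta_{(i,j),(k,\ell)}$. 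This should follow by combining the block structure of the two given partitionings: conditioning on $\mc{Y}_k$ forces $X$ into $\mc{X}_k$, conditioning on $\mc{Y}'_\ell$ forces $X$ into $\mc{X}'_\ell$, and one must check the support condition ensures the intersection block is genuinely nonempty and carries all the conditional mass. Once common refinements exist, a maximal common partitioning—one of greatest length $t$, which is finite since $t\le\min(|\mc{X}|,|\mc{Y}|)$—must refine every other common partitioning (otherwise its common refinement with another would be strictly longer, contradicting maximality), and hence it is unique up to relabeling of blocks.

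For part~(b), I would show the two directions of the ``iff''. If $J_{XY}$ is a common function of the maximal common partitioning, then for any $K$ with $H(K|X)=H(K|Y)=0$, the variable $K$ is a deterministic function $f(X)=g(Y)$, i.e.\ a common function; I claim the level sets of $K$ induce a common partitioning, so by maximality $J_{XY}$ refines it, meaning $K=h(J_{XY})$ for some function $h$, whence $H(K)\le H(J_{XY})$. Since $J_{XY}$ itself satisfies $H(J_{XY}|X)=H(J_{XY}|Y)=0$, the maximum is attained at $J_{XY}$. The one subtlety is checking that the level sets of an arbitrary common function do satisfy condition~(ii)—I would argue that if $f(x)=g(y)=i$ and $f(x')=g(y')=i$ with $p(x,y')>0$, then $f(x)=g(y')$ forces consistency, and more carefully that cross-block probabilities vanish. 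For the converse direction, if $H(K)=\max$ but $K$ is not the common function of the maximal partitioning, its level sets give a common partitioning strictly coarser than the maximal one, so $H(K)<H(J_{XY})$, a contradiction; here I should be slightly careful to phrase ``is a common function for the maximal partitioning'' up to the relabeling ambiguity.

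Part~(c) is essentially the refinement statement already used: any common function $C=f(X)=g(Y)$ has level sets forming a common partitioning, so the maximal common partitioning $J_{XY}$ refines it, which is exactly the assertion that $C$ is a function of $J_{XY}$, i.e.\ $C(J_{XY})$ in the paper's notation. The main obstacle throughout is part~(a), and within it the verification that intersections of blocks from two common partitionings again satisfy the conditional-probability condition~(ii) without the support hypothesis~(iii) being violated; I expect to need a short lemma stating that within a single block $\mc{X}_i\times\mc{Y}_i$, every $x\in\mc{X}_i$ with $p_X(x)>0$ is connected via positive-probability pairs to every $y\in\mc{Y}_i$ with $p_Y(y)>0$, or at least that the conditional distribution $p(\cdot\mid \mc{X}_i)$ is supported inside $\mc{Y}_i$ and vice versa, which is what makes the intersection argument go through cleanly.
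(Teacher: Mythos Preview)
Your plan is correct and close in spirit to the paper's proof, with only cosmetic differences in organization. For part~(a), the paper does not build the full common refinement of two arbitrary common partitionings; instead it assumes two \emph{maximal} partitionings differ, picks a single block $\mc{X}_{i_0}$ that meets at least two blocks of the other partitioning, and splits $\mc{X}_{i_0}$ into $\mc{X}_{i_0}\cap\mc{X}'_{\lambda_0}$ and $\mc{X}_{i_0}\setminus\mc{X}'_{\lambda_0}$ (with the corresponding split on $\mc{Y}_{i_0}$) to produce a common partitioning of length $t+1$, contradicting maximality. This is exactly one step of your common-refinement construction, so the two arguments are interchangeable. For part~(c), the paper gives a one-line entropy argument: if $C$ were not a function of $J_{XY}$ then $H(CJ_{XY})>H(J_{XY})$, contradicting part~(b); your level-set refinement argument is equivalent but slightly longer.

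One small point: the connectivity-style lemma you anticipate needing for the intersection argument in~(a) is unnecessary. Condition~(ii) alone already gives that for $y\in\mc{Y}_k\cap\mc{Y}'_\ell$ with $p_Y(y)>0$ one has $p(\mc{X}_k|y)=p(\mc{X}'_\ell|y)=1$, hence $p(\mc{X}_k\cap\mc{X}'_\ell|y)=1$; and condition~(iii) applied to either original partitioning ensures $p_X(x)p_Y(y)>0$ on each intersection block, so nonemptiness of $\mc{X}_i\cap\mc{X}'_j$ forces nonemptiness of $\mc{Y}_i\cap\mc{Y}'_j$ and vice versa. The ``connected via positive-probability pairs'' statement is Proposition~\ref{Prop:Ergodic}, which the paper proves separately and does not invoke here.
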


With property (a), we can speak unambiguously of \textit{the} maximal common partitioning of a distribution $p_{XY}$.  Consequently the variable $J_{XY}$ is unique up to a relabeling of its range.  The following proposition provides a useful characterization of values $x$ and $x'$ that belong to the same block in a maximal common partitioning.
\begin{proposition}
\label{Prop:Ergodic}
If $J_{XY}(x)=J_{XY}(x')$ for $x,x'\in J_{XY}$, then there exists a sequence of values
\[xy_1x_1y_2x_2\cdots y_n x'\]
such that $p(x,y_1)p(y_1,x_1)p(x_1,y_2)\cdots p(y_n,x')>0$.
\end{proposition}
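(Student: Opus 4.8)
The plan is to establish the claim by proving its contrapositive in a convenient form: I will show that the relation on $\mathcal{X}$ defined by declaring $x \approx x'$ whenever there is a ``connecting path'' $x y_1 x_1 y_2 \cdots y_n x'$ with every consecutive pair of positive probability is an equivalence relation, that it induces a common partitioning of $XY$, and that this partitioning is at least as fine as any other common partitioning — hence it refines the maximal one $J_{XY}$. Since $J_{XY}$ is itself a common partitioning, $J_{XY}$ in turn refines $\approx$; so the two coincide, and $J_{XY}(x)=J_{XY}(x')$ forces $x\approx x'$, which is exactly the statement. (It is convenient also to allow the degenerate path consisting of $x$ alone, so that $x\approx x$, or equivalently to note that since $p_X(x)>0$ there is some $y_1$ with $p(x,y_1)>0$ and then return along the same edge.)

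First I would define the path relation carefully. Say $x \leftrightarrow y$ if $p(x,y)>0$; then extend to $x \approx x'$ on $\mathcal{X}$ iff there is an alternating chain from $x$ to $x'$ through $\mathcal{Y}$-vertices with all edges positive, and symmetrically define $\approx$ on $\mathcal{Y}$; finally put $(x,y)$ and $(x',y')$ in the same class iff $x\approx x'$ (equivalently $y\approx y'$, once one checks these agree on the support). Reflexivity, symmetry, and transitivity are immediate from concatenating and reversing chains, so $\approx$ partitions $\mathrm{supp}[p_X]$ into classes $\mathcal{X}_i$ and $\mathrm{supp}[p_Y]$ into classes $\mathcal{Y}_i$, and one checks property (iii) of a common partitioning holds by construction and property (i) is just disjointness of equivalence classes.

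The main step — and the one I expect to require the most care — is verifying property (ii), namely $p(\mathcal{X}_i \mid \mathcal{Y}_j) = p(\mathcal{Y}_i \mid \mathcal{X}_j) = \delta_{ij}$. The point is that if $y \in \mathcal{Y}_j$ and $p(x,y)>0$ for some $x$, then the single edge $xy$ shows $x\approx y$, so $x \in \mathcal{X}_j$; hence conditioned on $y$ all mass sits in $\mathcal{X}_j$, giving $p(\mathcal{X}_i\mid y)=\delta_{ij}$, and averaging over $y\in\mathcal{Y}_j$ yields the claim (the other equality is symmetric). This shows $(\mathcal{X}_i,\mathcal{Y}_i)_i$ is a genuine common partitioning.

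It then remains to argue minimality (fineness): if $(\mathcal{X}'_k,\mathcal{Y}'_k)_k$ is any common partitioning, I claim each $\mathcal{X}_i$ lies inside a single $\mathcal{X}'_k$. Indeed, whenever $p(x,y)>0$, property (ii) of the primed partitioning forces $x$ and $y$ to share a block index; propagating this along a connecting chain $x y_1 x_1 \cdots x'$ shows $x$ and $x'$ share that index, so the entire $\approx$-class of $x$ is contained in one primed block. Thus $\approx$ refines every common partitioning, in particular $J_{XY}$; by Proposition~\ref{Prop:Partition-Unique}(a) the partitioning $(\mathcal{X}_i,\mathcal{Y}_i)_i$, being a common partitioning that is refined by the maximal one, must equal the maximal one, so $\approx$ coincides with $J_{XY}$ on the supports. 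Therefore $J_{XY}(x)=J_{XY}(x')$ implies $x\approx x'$, which is precisely the asserted existence of a positive-probability alternating chain.
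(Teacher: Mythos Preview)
Your proof is correct and is essentially the same approach as the paper's: both recognize that the blocks of $J_{XY}$ are precisely the connected components of the bipartite support graph on $\mathrm{supp}(p_X)\cup\mathrm{supp}(p_Y)$ with an edge whenever $p(x,y)>0$ --- the paper by building the component of $x$ directly via iterated neighborhoods $S_0,T_1,S_1,\dots$ and invoking maximality to force $x'$ into it, you by defining the path relation $\approx$ globally and showing it coincides with $J_{XY}$. One small wording slip to fix: in your last paragraph, ``a common partitioning that is refined by the maximal one'' should read ``that refines the maximal one'' --- what you actually established is that $\approx$ refines $J_{XY}$, and since $\approx$ is itself a common partitioning, its length is at most (and hence equal to) that of $J_{XY}$, giving equality.
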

\begin{proof}
See the appendix as well as \cite{Gacs-1973a}.
\end{proof}

\section{Results}  

\label{Sect:Results}

\subsection{Key Distillation Using Auxiliary Public Randomness}

\label{Sect:DistCommonRand}

The G\'{a}cs and K\"{o}rner Common Information plays a central role in the problem of key distillation with no communication.  To see a preliminary connection, we recall an operational interpretation of $H(J_{XY})$ that G\'{a}cs and K\"{o}rner prove in Ref. \cite{Gacs-1973a}.  The task involves Alice and Bob constructing faithful encodings of their respective sources $X$ and $Y$, and $H(J_{XY})$ quantifies the asymptotic average sequence-length of codewords per copy such that both Alice and Bob's encodings output matching codewords with high probability over this sequence \cite{Gacs-1973a}.  

For the task of key distillation, Alice and Bob are likewise trying to convert their sources into matching sequences of optimal length.  However, the key distillation problem is different in two ways.  On the one hand there is the additional constraint that the common sequence should be nearly uncorrelated from Eve.  On the other hand, unlike the  G\'{a}cs-K\"{o}rner problem, it is not required that these sequences belong to faithful encodings of the sources $X$ and $Y$.  Nevertheless, we find that $H(J_{XY}|Z)$ quantifies the distillable key when Alice and Bob are unable to communicate with one another.  This is also the rate even if Alice and Bob have access to auxillary public randomness which is uncorrelated with their primary distribution.  
\begin{theorem}
\label{Thm:CR-Theorem}
$K^{c.r.}(X:Y||Z)=H(J_{XY}|Z)$.  Moreover, $H(J_{XY}|Z)$ is achievable with no additional common randomness.
\end{theorem}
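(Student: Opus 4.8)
The plan is to prove the two inequalities $K^{c.r.}(X:Y||Z) \ge H(J_{XY}|Z)$ and $K^{c.r.}(X:Y||Z) \le H(J_{XY}|Z)$ separately, with the achievability direction not requiring any shared randomness $W$.

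For achievability, I would have Alice and Bob both compute the common function $J_{XY}$ from their respective copies $X^n$ and $Y^n$; by Proposition~\ref{Prop:Partition-Unique} this is a well-defined deterministic function of $X$ alone and of $Y$ alone, so they obtain a perfectly correlated string $J_{XY}^n$ with no communication and no auxiliary randomness. The remaining task is to extract from $J_{XY}^n$ a key of rate $H(J_{XY}|Z)$ that is (asymptotically) uniform and decoupled from $Z^n$. This is a standard privacy-amplification / randomness-extraction step: since the source is i.i.d., by the asymptotic equipartition property the conditional min-entropy of $J_{XY}^n$ given $Z^n$ is $\approx n H(J_{XY}|Z)$, so applying a two-universal hash function (whose seed can be taken from $Q_A$ and sent to Bob — but actually since we want \emph{no} shared randomness, one uses that a suitable deterministic extractor / typical-set argument suffices, or notes that the hash seed is negligible-rate and could in principle be folded into $K$; the cleanest route is the Csisz\'ar–K\"orner style direct coding theorem for common randomness) yields a key $K$ with $\log|\mc{K}| \approx n H(J_{XY}|Z)$ satisfying conditions (i)–(iii). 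I would cite the standard one-shot privacy amplification bound to make this rigorous.

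For the converse, the key point is that common randomness $W$ — being independent of $X^nY^nZ^n$ — cannot help. Given any valid protocol with $f(X^n,Q_A,W)=g(Y^n,Q_B,W)=K$ with high probability, I would first argue that for a typical fixed value $w$ of $W$, the maps $f_w(X^n,Q_A)$ and $g_w(Y^n,Q_B)$ still produce a common near-key of the same rate; since conditioning on the independent $w$ changes none of the relevant entropies, this reduces to the no-shared-randomness case. Then, with $K$ approximately a common function of $(X^n,Q_A)$ and $(Y^n,Q_B)$ up to probability $\epsilon$, I would invoke the maximality/uniqueness of the G\'acs–K\"orner partition: any variable that is (nearly) a deterministic function of both $X^n$ and $Y^n$ must be (nearly) a function of $J_{X^nY^n} = J_{XY}^n$. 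A continuity/Fano-type argument — controlling the effect of the local randomness $Q_A,Q_B$ and the error $\epsilon$ — then gives $H(K) \lesssim H(J_{XY}^n) = nH(J_{XY})$, and combined with the secrecy condition (ii) that forces $\log|\mc{K}| - H(K|Z^nW) < \epsilon$, one gets $\frac{1}{n}\log|\mc{K}| \le H(J_{XY}|Z) + o(1)$. Taking $n\to\infty$ and $\epsilon\to 0$ finishes the converse.

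The main obstacle I expect is the converse, specifically making rigorous the step that $K$ being an \emph{approximate} common function of $X^n$ and $Y^n$ (rather than an exact one, because of the error $\epsilon$ and the local randomizers $Q_A, Q_B$) forces $K$ to be essentially a function of $J_{XY}^n$ with only an $o(n)$ loss in entropy. The exact statement in Proposition~\ref{Prop:Partition-Unique}(c) is about exact common functions, so one needs an approximate/robust version. The way I would handle this is via a chain: use Fano's inequality to pass from $\Pr[f=g]>1-\epsilon$ to $H(K|X^nQ_A), H(K|Y^nQ_B)$ both being $O(n\epsilon' + 1)$ for small $\epsilon'$; then exploit Proposition~\ref{Prop:Ergodic} on the structure of blocks — pairs outside a common block are connected by a positive-probability zig-zag path, which would force $f$ and $g$ to disagree on a non-negligible set unless $K$ respects the block structure — to conclude $K$ is within statistical distance $\delta(\epsilon)$ of a function of $J_{XY}^n$, with $\delta \to 0$ as $\epsilon\to 0$. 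This block-structure argument, combined with the i.i.d. structure of $J_{XY}^n$ and a single-letterization of $H(J_{XY}^n|Z^n) = nH(J_{XY}|Z)$, is where the real work lies, and I would present it carefully as the crux of the proof.
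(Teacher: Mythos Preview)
Your achievability argument is the same as the paper's: compute $J_{XY}^n$ locally on each side and apply privacy amplification. The paper invokes a deterministic extractor (Corollary~17.5 of Csisz\'ar--K\"orner), so no seed and no shared randomness are needed.

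Your converse takes a genuinely different route, and the step you yourself flag as the main obstacle is a real gap. You propose to work directly at the $n$-letter level: from $\Pr[f(X^n,Q_A,W)=g(Y^n,Q_B,W)]>1-\epsilon$ deduce, via the zig-zag connectivity of Proposition~\ref{Prop:Ergodic}, that $K$ is close to a function of $J_{XY}^n$. The difficulty is that Proposition~\ref{Prop:Ergodic} is a statement about \emph{exact} common functions of a single copy; in the $n$-fold product the zig-zag paths connecting two points of the same $J_{XY}^n$-block have length $\Theta(n)$ and each link has probability bounded below only by a constant depending on $p_{XY}$, so the total path probability is exponentially small in $n$. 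Hence ``$f$ changes along some link'' does not yield a disagreement probability bounded away from zero uniformly in $n$, and your sketch does not give $H(K\,|\,J_{XY}^n)=o(n)$. What you would actually need here is the Witsenhausen/G\'acs--K\"orner converse for approximate common functions, which is itself a nontrivial theorem and is not a consequence of the single-copy block structure alone.

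The paper sidesteps this entirely by single-letterizing first. Using the Csisz\'ar--K\"orner identity (Lemma~\ref{lemma_key_equality}) with a time-sharing index $J$ and auxiliary $U=JY^{(<J)}Z^{(>J)}$, the rate is bounded by $I(K{:}Y|UW)-I(K{:}Z|UW)$ under the \emph{exact} Markov chain $KU-XW-YZ$; a parallel chain of inequalities starting from $H(K|Y^nW)-H(K|X^nW)$ yields the second exact Markov chain $KU-YW-XZ$. The double-Markov proposition then gives $I(K{:}XY|J_{XY}W)=0$, and since $K$ is a function of $(X,W)$ this forces $H(K|J_{XY}W)=0$ exactly. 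Only now is Proposition~\ref{Prop:Ergodic} invoked (inside Proposition~\ref{Prop:Common-function-independent}), and it is applied to a single-letter exact common function of $(X,W)$ and $(Y,W)$ with $W$ independent of $XY$ --- precisely the regime where the zig-zag argument is valid. In short, the single-letterization is the device that converts your approximate $n$-letter common-function condition into an exact single-letter one, and that conversion is the substance of the converse.
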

\begin{proof}
\noindent\textbf{Achievability:}  
We will prove that $H(J_{XY}|Z)$ is an achievable rate without any auxiliary shared public randomness (i.e. $W$ is constant).  For $n$ copies of $p_{XYZ}$, Alice and Bob extract their common information from each copy of $p_{XYZ}$.  This will generate a sequence of $J_{XY}^n$, with Alice and Bob having identical copies of this sequence.  It is now a matter of performing privacy amplification on this sequence to remove Eve's information \cite{Bennett-1995a}.  The main construction is guaranteed to exist by the following lemma.
\begin{lemma}[See Corollary~17.5 in \cite{Csiszar-2011a}]
\label{Lem:PrivacyAmp}
For an i.i.d. source of two random variables $J_{XY}$ and $Z$ with $J_{XY}$ ranging over set $\mc{J}$, for any $\delta>0$ and $k<2^{n[H(J_{XY}|Z)-\delta]}$, there exists an $\epsilon>0$ and a mapping $\kappa:\mc{J}^n\to\mc{K}=\{1,2,\cdots ,k\}$ such that
\[\log|\mc{K}|-H(\kappa(J_{XY}^n)|Z^n)<2^{-n\epsilon}.\]
\end{lemma}
\noindent From this lemma, it follows that $H(J_{XY}|Z)$ is an achievable key rate.

\medskip
\noindent\textbf{Converse:}  The converse proof follows analogously to the converse proof of Theorem 2.6 in Ref. \cite{Csiszar-2000a} (see also \cite{Csiszar-2011a}).  We will first prove the converse under the assumption of no local randomness (i.e. $Q_A$ and $Q_B$ are constant).  We will then show that adding local randomness does not change the result.  Suppose that $\KCR=R$.  We consider a slightly weaker security condition than the one presented in Sect. \ref{Sect:Defns}.  This is done by replacing (ii) with (ii'): $\frac{1}{n}(\log|\mc{K}|-H(K|Z^nW))<\epsilon$.  Under the weaker condition, (i) implies that  
\begin{align}
\frac{1}{n}|H(f|Z^nW)-H(K|Z^nW)|&\leq \frac{1}{n}\max\{H(f|KZ^nW),H(K|fZ^nW)\}\notag\\
&\leq \frac{1}{n}\max\{H(f|K),H(K|f)\}\notag\\
&\leq\frac{1}{n} \left(h(\epsilon)+\epsilon(\log|\mc{K}|-1)\right),
\end{align}
where the last line follows from Fano's Inequality.  Hence, under the assumption of the original security condition, $\frac{1}{n}(\log|\mc{K}|-H(f|Z^nW))<\epsilon+O(\frac{\epsilon}{n})$.  This means that, without loss of generality, $K$ can be assumed to be a function of $(X^n,Q_A,W)$; i.e. $K=f(X^n,Q_A,W)$.  Then, for every $\delta,\epsilon>0$ and $n$ sufficiently large, there exists a random variable $W$ independent of $X^nY^nZ^n$ along with functions $f(X^n,W)$ and $g(Y^n,W)$ satisfying (i) $Pr[f=g=K]>1-\epsilon$, (ii') $\tfrac{1}{n}(\log|\mc{K}|-H(K|Z^nW))<\epsilon$ and (iii) $\frac{1}{n}\log|\mc{K}|\geq R$.

Note that from (i) in the security condition, Fano's Inequality together with data processing gives 
\begin{align}\label{eq_fano_00}
H(K|Y^nW)<h(\epsilon)+ \epsilon(\log|\mc{K}|-1).
\end{align}
Combining this with (ii') gives
\[\frac{1}{n}(1-\epsilon)\log|\mc{K}|<\frac{1}{n}[H(K|Z^nW)-H(K|Y^nW)+h(\epsilon)-\epsilon],\]
and so
\begin{align}
R\leq \frac{1}{n}\log|\mc{K}|+\delta<\frac{1}{1-\epsilon}\cdot\frac{1}{n}[H(K|Z^nW)-H(K|Y^nW)]+\frac{h(\epsilon)-\epsilon}{1-\epsilon}\cdot\frac{1}{n}+\delta.
\end{align}
To analyze the quantity $H(K|Z^nW)-H(K|Y^nW)$, we will use a standard trick. 

\begin{lemma}\label{lemma_key_equality}
Let $J$ be uniformly distributed over the set $\{1,\cdots,n\}$ and let $A^{(i)}$ denote the $i^{th}$ instance of $A$ in $A^n$.  Likewise, let $A^{(<i)}=A^{(1)}\cdots A^{(i-1)}$ and $A^{(>i)}= A^{(i+1)} \cdots A^{^(n)}$ with $A^{(<1)}:=\emptyset$ and $A^{(n+1)}:=\emptyset$.  
Then for random variables $P$ and $Q$ and sequences of random variables $A^n, B^n$
\begin{equation}
H(P|A^nQ) - H(P|B^n Q) = n [ I(P:B^{(J)}|TQ) - I(P:A^{(J)}|TQ) ],
\end{equation}
where $T=J A^{(>J)} B^{(<J)}$
\end{lemma}
\begin{proof}
See, e.g., proof of Lemma~17.12 in \cite{Csiszar-2011a}.
\end{proof}
\noindent Then we can use Lemma~\ref{lemma_key_equality} to obtain
\begin{align}
H(K|Z^nW)-H(K|Y^nW)&=n[I(K:Y^{(J)}|UW)-I(K:Z^{(J)}|UW)],
\end{align}
where $U:=JY^{(<J)}Z^{(>J)}$.  Notice that for any $i\in\{1,\cdots,n\}$ we have
\begin{align}
&X^{(<i)}X^{(>i)} Y^{(<i)}Z^{(>i)} -X^{(i)}-Y^{(i)}Z^{(i)},
\end{align}
since the sampling is i.i.d..  Therefore, because $K$ is a function of $(X^n,W)$, we have
\begin{align}
KU-X^{(J)}W-&Y^{(J)}Z^{(J)}.
\end{align}
Removing the superscript ``$J$'' and taking $\epsilon,\delta\to 0$, we have the bound 
\begin{align}
\label{Eq:ConverseRate}
R\leq I(K:Y|UW)-I(K:Z|UW)
\end{align}
such that $KU-XW-YZ$.  

Next, Eq.~(\ref{eq_fano_00}) gives
\begin{align}
h(\epsilon)+\epsilon(\log|\mc{K}|-1)&>H(K|Y^nW)-H(K|X^nW)\notag\\
&=n[I(K:X^{(J)}|JY^{(<J)}X^{(>J)}W)-I(K:Y^{(J)}|JY^{(<J)}X^{(>J)}W)],
\end{align}
where the first inequality follows because $H(K|X^nW)$ is nonnegative and the quality follows from Lemma~\ref{lemma_key_equality}.
We want to put this in terms of $U$.  To do this, note that
\begin{align}
I(K:X^{(J)}|JY^{(<J)}X^{(>J)}W)&=I(KY^{(<J)}X^{(>J)}:X^{(J)}|JW)\notag\\
&=I(KY^{(<J)}X^{(>J)}Z^{(>J)}:X^{(J)}|JW)-I(Z^{(>J)}:X^{(J)}|JKY^{(<J)}X^{(>J)}W) \notag\\
&=I(KUX^{(>J)}:X^{(J)}|JW)\notag\\
&=I(KU:X^{(J)}|JW)+I(X^{(>J)}:X^{(J)}|KUW),
\end{align}
where the first equality follows from the chain rule and $I(Y^{(<J)}X^{(>J)}:X^{(J)}|JW)=0$, and in the second equality 
\begin{align}
I(Z^{(>J)}:X^{(J)}|JKY^{(<J)}X^{(>J)}W)&\leq I(Z^{(>J)}:KX^{(J)}|JY^{(<J)}X^{(>J)}W)\notag\\
&=I(Z^{(>J)}:X^{(J)}|JY^{(<J)}X^{(>J)}W)\label{eq_zx}\\
&=0. \notag
\end{align}
The first equality (\ref{eq_zx}) uses $I(Z^{(>J)}:K|JY^{(<J)}X^{(\geq J)}W)=0$ since $K - JY^{(<J)}X^{(\geq J)}W - Z^{(>J)}$ is a Markov chain. Again this follows from the basic Markov condition $K-WX^n-Y^nZ^n$ and the sampling is i.i.d.. The second equality follows from i.i.d. sampling and $W$ independence of $X^n,Y^n,Z^n$.

A similar analysis likewise gives
\begin{align}
I(K:Y^{(J)}|JY^{(<J)}X^{(>J)}W)&=I(KU:Y^{(J)}|JW)+I(X^{(>J)}:Y^{(J)}|KUW)\notag\\
&\leq I(KU:Y^{(J)}|JW)+I(X^{(>J)}:X^{(J)}|KUW),
\end{align}
where the inequality follows from the Markov condition 
\[X^{(>J)}-KUX^{(J)}W-Y^{(J)},\]
which can be derived from the more obvious Markov condition
\[KUX^n-JX^{(J)}W-Y^{(J)}.\]
Putting everything together yields
\begin{align}
h(\epsilon)+\epsilon(\log|\mc{K}|-1)&>H(K|Y^nW)-H(K|X^nW)\notag\\
&>I(KU:X^{(J)}|JW)-I(KU:Y^{(J)}|JW)\notag\\
&=I(KU:X^{(J)}Y^{(J)}|JW)-I(KU:Y^{(J)}|JX^{(J)}W)-I(KU:Y^{(J)}|JW)\label{eq_f03}\\
&=I(KU:X^{(J)}|JY^{(J)}W)+I(KU:Z^{(J)}|JY^{(J)}X^{(J)}W)\label{eq_f04}\\
&=I(KU:X^{(J)}Z^{(J)}|JY^{(J)}W), \notag
\end{align}
where the second term in (\ref{eq_f03}) is zero from the already proven Markov chain $KU-XW-YZ$, and  in (\ref{eq_f04}) {we use the fact that $I(KU:Z^{(J)}|JY^{(J)}X^{(J)}W)=0$}.  Removing the superscript ``$J$'' and taking $\epsilon\to 0$ necessitates the Markov chain $KU-YW-XZ$.

The double Markov chain $K-XW-Y$ and $K-YW-X$ implies that $I(K:XY|J_{XY}W)=0$ (see Proposition \ref{Prop:Double-Markov} below).  Since $K$ is a function of $(X,W)$, we have that $H(K|J_{XY}W)=0$.  Thus, $K$ must also be a function of $(Y,W)$.  Continuing Eq. \eqref{Eq:ConverseRate} gives the bound
\begin{align}
\label{Eq:Double-Markov-Use}
R&\leq I(K:Y|UW)-I(K:Z|UW)\notag\\
&=H(K|UW)-I(K:Z|UW)\notag\\
&=H(K|ZUW)\leq H(K|ZW).
\end{align}
We have therefore obtained the following:
\begin{equation}
\label{Eq:Lem-CR-Converse}
R\leq\max H(K|ZW),
\end{equation}
where the maximization is taken over all variables $K$ such that $H(K|XW)=H(K|YW)=0$.

This can be further bounded by using the following proposition.
\begin{proposition}
\label{Prop:Common-function-independent}
If $W$ is independent of $XY$ and $H(K|XW)=H(K|YW)=0$, then $K$ is a function of $(J_{XY},W)$.
\end{proposition}
\begin{proof}
The fact that $H(K|XW)=H(K|YW)=0$ implies the existence of two functions $f(X,W)$ and $g(Y,W)$ such that $Pr[f(X,W)=g(Y,W)]=1$.  Consequently, if $p(x_1,y_1)p(x_1,y_2)>0$, then $f(x_1,w)=g(y_1,w)=g(y_2,w)$ for all $w\in\mc{W}$ with $p(w)>0$.  Indeed, if, say, $f(x_1,w)\not=g(y_1,w)$, then $Pr[f(X,W)\not=g(Y,W)]\geq p(x_1,y_1,w)=p(x_1,y_2)p(w)>0$, where we have used the independence between $XY$ and $W$.  By the same reasoning, $p(x_1,y_1)p(y_1,x_2)>0$ implies that $f(x_1,w)=f(x_2,w)=g(y_1,w)$ for all $w\in\mc{W}$.  Turning to Proposition \ref{Prop:Ergodic}, if $J_{XY}(x)=J_{XY}(x')$, then there exists a sequence $xy_1x_1y_2x_2\cdots y_n x'$ such that $p(xy_1)p(y_1x_1)p(x_1y_2)\cdots p(y_nx')>0$.  Therefore, as just argued, we must have that $f(x,w)=f(x',w)$ for all $w\in \mc{W}$.  Hence $K$ must be a function of $(J_{XY},W)$.
\end{proof}
We now apply Proposition \ref{Prop:Common-function-independent} to Eq. \eqref{Eq:Lem-CR-Converse}.  Suppose that $K$ obtains the maximization in Eq. \eqref{Eq:Lem-CR-Converse}.  Then, since $K$ is a function of $(J_{XY},W)$, we have that 
\begin{equation}
H(K|ZW)\leq H(J_{XY}W|ZW)=H(J_{XY}|ZW)\leq H(J_{XY}|Z).
\end{equation}
This proves the desired upper bound under no local randomness.

To consider the case when Alice and Bob have local randomness $Q_A$ and $Q_B$, respectively, define $\hat{X}:=(X,Q_A)$ and $\hat{Y}:=(Y,Q_B)$.  Then repeating the above argument shows that $R\leq H(J_{\hat{X}\hat{Y}}|Z)$.  It is straightforward to show that with $Q_A$ and $Q_B$ pairwise independent and independent of $XY$, we have $J_{\overline{X},\overline{Y}}=J_{XY}$.

We complete the proof by giving the Double Markov Chain Proposition used to obtain equation \eqref{Eq:Double-Markov-Use} above.
\begin{proposition}[Conditional Double Markov Chains (also Exercise 16.25 in \cite{Csiszar-2011a})]
\label{Prop:Double-Markov}
Random variables $WXYZ$ satisfy the two Markov chains $X-YZ-W$ and $Y-XZ-W$ iff $I(XY:W|J_{XY|Z}Z)=0$.
\end{proposition}

\begin{proof}
If $I(XY:W|J_{XY|Z}Z)=0$ then $I(Y:W|J_{XY|Z}Z)=0$.  The Markov chain $X-YZ-W$ follows since 
\begin{align*}
I(XY:W|J_{XY|Z}Z)&=I(X:W|YJ_{XY|Z}Z)+I(Y:W|J_{XY|Z}Z)\notag\\
&=I(X:W|YZ)+I(Y:W|J_{XY|Z}Z),
\end{align*}
where we have use the fact that $J_{XY|Z}$ is a function $X$ and $Y$ when given $Z$.  A similar argument shows that $Y-XZ-W$.

On the other hand, if the two Markov chains hold, then whenever $p_{XYZ}{x,y,z}>0$, we have
\begin{equation}
p(W=w|x,y,z)=p(w|x,z)=p(w|y,z).
\end{equation}
Hence, the conditional distribution $p(w|x,y,z)$ is constant across each block $\mc{X}_i\times\mc{Y}_i$ in the maximal common partitioning of $P_{XY|Z=z}$.  Consequently,
\[p_{W|XYZ}=p_{W|J_{XY|Z}Z},\]
and so for any $J_{XY|Z}=j$ and $Z=z$ for which $p(j,z)>0$, we have
\begin{align}
p(x,y,w|j,z)&=p(w|x,y,j,z)p(x,y|j,z)\notag\\
&=p(w|x,y,z)p(x,y|j,z)=p(w|j,z)p(x,y|j,z).
\end{align}
Thus, $I(XY:W|J_{XY|Z}Z)=0$.
\end{proof}
\end{proof}

In Ref. \cite{Chitambar-2014d} we have studied a related quantity known as the \textit{maximal conditional common function} $J_{XY|Z}$, which is the collection of variables $\{J_{XY|Z=z}:z\in\mc{Z}\}$ with $J_{XY|Z=z}$ being a maximal common function of the conditional distribution $p_{XY|Z=z}$.  The variable $J_{XY|Z}$ is again unique for every distribution $p_{XYZ}$ up to relabeling.  Since $J_{XY|Z=z}$ is computed from both $X$ and $Y$ with the additional information that $Z=z$, maximality of $J_{XY|Z=z}$ ensures that $J_{XY}$ is a function of $J_{XY|Z=z}$ for each $z\in\mc{Z}$.  In other words, a labeling of $J_{XY}$ and $J_{XY|Z}$ can be chosen so that $J_{XY}$ is a coarse-graining of $J_{XY|Z}$.  Therefore, $H(J_{XY}|Z)\leq H(J_{XY|Z}|Z)$ with equality iff $H(J_{XY|Z}|Z J_{XY})=0$.  When the equality condition holds, it means that for each $z\in\mc{Z}$, the value of $J_{XY|Z=z}$ can be determined from $J_{XY}$ alone.  Hence, the variables $J_{XY}$ and $J_{XY|Z}$ must be equivalent up to relabeling.  From this it follows that a distribution satisfies $H(J_{XY|Z}|Z J_{XY})=0$ iff it admits a decomposition of
\begin{align}
\label{Eq:UB-structure}
p(x,y,z)=\sum_{J_{XY}=j}p(x,y|z,j)p(j|z)p(z),
\end{align}
where for any $x,x'\in\mc{X}$, $y,y'\in\mc{Y}$ and $z,z'\in\mc{Z}$ the conditional distributions satisfy 
\begin{align*}
p(x,y|z,j)p(x,y'|z',j')&=0,& p(x,y|j)p(x',y|z',j')&=0\quad\text{if}\quad j\not=j'.
\end{align*}
The class of distributions of this form we shall call \textit{uniform block} (UB) (see Fig. \ref{Fig:Uniform_Block}).

\begin{figure}[t]
   \centering
     \includegraphics[width=.7\textwidth]{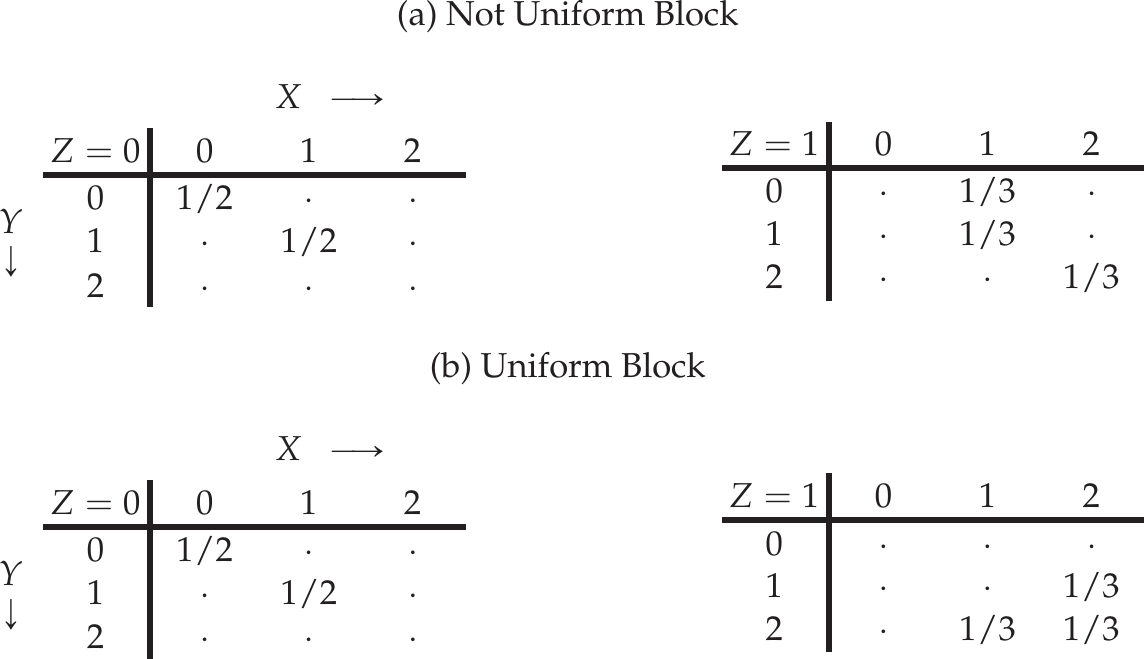}
    \captionsetup{font={small}}
    \caption[.]{Examples of a distribution that is not uniform block (a) and one that is (b).  Each entry corresponds to a conditional probability value $p(x,y|z)$.  UB distribution (b) is not uniform block independent (UBI) since the block in the $Z=1$ plane contains correlations between Alice and Bob.}
       \label{Fig:Uniform_Block}
\end{figure}

The quantity $H(J_{XY|Z}|Z)$ is the private key rate when Eve is helping by announcing her variable, yet Alice and Bob are still prohibited from communicating with one another.  Thus, the difference $H(J_{XY|Z}|Z)-H(J_{XY}|Z)$ quantifies how much Eve can assist Alice and Bob in distilling key when no communication is exchanged between the two.  From the previous paragraph, it follows that Eve offers no assistance (i.e. the private key rate equals the secret key rate) in the no-communication scenario iff the distribution is UB.

Returning to Theorem \ref{Thm:CR-Theorem}, we can now answer the underlying question of this paper for no-communication distillation.  By using the chain rule of conditional mutual information and the fact that $J_{XY}$ is both a function of $X$ and $Y$, we readily compute
\begin{align}
I(X:Y|Z)=I(J_{XY}X:Y|Z)&=I(J_{XY}:Y|Z)-I(X:Y|ZJ_{XY})&\notag\\
&=H(J_{XY}|Z)-I(X:Y|ZJ_{XY}).
\end{align}
The conditional mutual information is thus an achievable rate whenever $I(X:Y|ZJ_{XY})=0$.  Distributions satisfying this equality are uniform block with the extra condition that $p(x,y|z,j)=p(x|z,j)p(y|z,j)$ in Eq. \eqref{Eq:UB-structure}.  We shall call distributions having this form \textit{uniform block independent} (UBI).  Putting everything together, we find that 
\begin{corollary}
\label{Cor:CR-Corollary}
A distribution $p_{XYZ}$ satisfies $K^{c.r.}(X:Y||Z)=I(X:Y|Z)$ if and only if it is uniformly block independent.  
\end{corollary}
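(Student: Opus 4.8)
The plan is to reduce the statement to the identity displayed just before the corollary. By Theorem~\ref{Thm:CR-Theorem} we have $K^{c.r.}(X:Y||Z)=H(J_{XY}|Z)$, and since $J_{XY}$ is simultaneously a function of $X$ and of $Y$, the chain rule for conditional mutual information gives $I(X:Y|Z)=H(J_{XY}|Z)-I(X:Y|ZJ_{XY})$. Hence $K^{c.r.}(X:Y||Z)=I(X:Y|Z)$ holds if and only if $I(X:Y|ZJ_{XY})=0$, and the entire task becomes showing that this last condition characterizes exactly the UBI distributions.

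The ``if'' direction is immediate: if $p_{XYZ}$ is UBI then by definition it admits the decomposition \eqref{Eq:UB-structure} (the UB form, with $j$ ranging over the values of $J_{XY}$) together with the product condition $p(x,y|z,j)=p(x|z,j)p(y|z,j)$, so that $I(X:Y|Z=z,J_{XY}=j)=0$ for every relevant pair, and averaging over $z,j$ yields $I(X:Y|ZJ_{XY})=0$. For the ``only if'' direction, assume $I(X:Y|ZJ_{XY})=0$; I must produce the decomposition \eqref{Eq:UB-structure} and verify the product condition. The product condition is just the hypothesis restricted to each pair $(z,j)$. The two support identities of \eqref{Eq:UB-structure} come for free from the fact that $J_{XY}$ is a common function: any $x$ with $p_X(x)>0$ pins down the value of $J_{XY}$, so the supports of $p(\cdot,\cdot|z,j)$ for distinct $j$ use disjoint $x$-values and disjoint $y$-values. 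What remains is to check that conditioning on $Z=z$ does not refine the G\'acs--K\"orner partition, i.e.\ that $J_{XY|Z=z}$ coincides with (a restriction of) $J_{XY}$ on $supp[p_{XY|Z=z}]$, so that the distribution is genuinely UB. Here I use that $I(X:Y|Z=z,J_{XY}=j)=0$ forces the $j$-th slice of $p_{XY|Z=z}$ to have full rectangular support $\mc{X}^z_j\times\mc{Y}^z_j$ with a product distribution; picking any $y\in\mc{Y}^z_j$, every pair $x,x'\in\mc{X}^z_j$ satisfies $p(x,y|z)p(x',y|z)>0$, so in any common partitioning of $p_{XY|Z=z}$ they lie in the same block (equivalently, invoke Proposition~\ref{Prop:Ergodic} applied to $p_{XY|Z=z}$). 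Thus each $\mc{X}^z_j\times\mc{Y}^z_j$ is contained in a single block, and distinct slices cannot merge by the disjointness just noted; hence the maximal common partitioning of $p_{XY|Z=z}$ is precisely $\{\mc{X}^z_j\times\mc{Y}^z_j\}_j$, which is a coarse-graining of $J_{XY}$. Therefore $H(J_{XY|Z}|ZJ_{XY})=0$, the distribution is UB, and with the product condition it is UBI.

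I expect the main obstacle to be the last step of the ``only if'' direction: arguing that pairwise conditional independence inside each global block automatically produces the \emph{uniform} block structure, i.e.\ that passing to $p_{XY|Z=z}$ cannot split a $J_{XY}$-block into finer pieces. The content here is the ``ergodicity'' of a product distribution on a rectangle as a common partitioning, which is exactly what Proposition~\ref{Prop:Ergodic} (or, directly, property~(ii) in the definition of a common partitioning) supplies once it is applied to each conditional distribution $p_{XY|Z=z}$ rather than to $p_{XY}$ itself.
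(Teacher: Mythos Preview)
Your approach is the same as the paper's---reduce via Theorem~\ref{Thm:CR-Theorem} and the chain rule to the condition $I(X:Y|ZJ_{XY})=0$, then identify this with UBI---with the bonus that you actually justify the step the paper merely asserts, namely that $I(X:Y|ZJ_{XY})=0$ forces the UB structure (your rectangular-support argument for each conditional slice is correct). One minor correction: the chain-rule identity should read $I(X:Y|Z)=H(J_{XY}|Z)+I(X:Y|ZJ_{XY})$ with a plus sign; the paper carries the same typo, and it does not affect the equivalence.
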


\begin{remark}
The no-communication results discussed above and proven in the appendix are already implicit in the work of Csisz\'{a}r and Narayan.  In Ref. \cite{Csiszar-2000a}, they study various key distillation scenarios with Eve functioning as a helper and limited communication between Alice and Bob.  Included in this is the no-communication scenario with and without helper.  However, being very general in nature, Csisz\'{a}r and Narayan's results involve optimizations over auxiliary random variables, and it is therefore still a non-trivial matter to discern Theorem \ref{Thm:CR-Theorem} and Corollary \ref{Cor:CR-Corollary} directly from their work.  Additionally, they do not consider the scenario of just shared public randomness.
\end{remark}

\subsection{Obtaining $I(X:Y|Z)$ with One-Way Communication}

\label{Sect:One-Way}

In this section we want to identify the type of tripartite distributions from which secret key can be distilled at the rate $I(X:Y|Z)$ using one-way communication.  Since $K(X:Y|Z)\leq I(X:Y|Z)$, our analysis deals with distributions for which one-way communication suffices to optimally distill secret key.  Manipulating Eq. \eqref{Eq:One-way Rate} of Lemma \ref{Lem:AC-Lemma} allows us to determine when $\KAB=I(X:Y|Z)$.  We have that
\begin{align}
I(K:Y|U)-I(K:Z|U)&=I(K:Y|ZU)-I(K:Z|YU)\notag\\
&=I(KU:Y|Z)-I(U:Y|Z)-I(K:Z|YU)\notag\\
&=I(X:Y|Z)-I(X:Y|KUZ)-I(U:Y|Z)-I(K:Z|YU).
\end{align}
From this and Lemma \ref{Lem:AC-Lemma}, we conclude the following.
\begin{lemma}
\label{Lem:Markov-conds}
Distribution $p_{XYZ}$ has $\overrightarrow{K}(X:Y||Z)=I(X:Y|Z)$ iff there exists variables $KUXYZ$ with $K$ and $U$ ranging over sets of size no greater than $|\mc{X}|+1$ such that
\begin{align}
\label{Eq:EqualityConds}
&(1)\quad KU-X-YZ,&&(2)\quad X-KUZ-Y,\notag\\
&(3)\quad U-Z-Y,&&(4)\quad K-YU-Z.
\end{align}
\end{lemma}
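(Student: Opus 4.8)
\emph{Proof plan.} The computation carrying the real content is already displayed immediately before the statement: for any $K,U$ obeying the Markov chain $KU-X-YZ$ one has
\[
I(K:Y|U)-I(K:Z|U) = I(X:Y|Z)-I(X:Y|KUZ)-I(U:Y|Z)-I(K:Z|YU),
\]
the first two equalities in that chain being pure entropy identities (expand $I(K:YZ|U)$ in two ways, then apply the chain rule to $I(KU:Y|Z)$), and the last one invoking $KU-X-YZ$ only to the extent of $I(KU:Y|XZ)\le I(KU:YZ|X)=0$. The plan is to feed this identity into Lemma~\ref{Lem:AC-Lemma} together with the bound $\overrightarrow{K}(X:Y||Z)\le I(X:Y|Z)$ recorded in the introduction.

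First I would observe that the three subtracted terms on the right-hand side are each nonnegative, being conditional mutual informations; hence for a pair $K,U$ satisfying (1) the quantity $I(K:Y|U)-I(K:Z|U)$ equals $I(X:Y|Z)$ if and only if all three terms vanish, i.e. if and only if $X-KUZ-Y$, $U-Z-Y$, and $K-YU-Z$ all hold --- precisely conditions (2), (3), and (4). This reduces the lemma to a statement about an optimizer in Lemma~\ref{Lem:AC-Lemma}.

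For the forward direction I would assume $\overrightarrow{K}(X:Y||Z)=I(X:Y|Z)$. The feasible set in Lemma~\ref{Lem:AC-Lemma} --- pairs $K,U$ with $KU-X-YZ$ and $|\mc{K}|,|\mc{U}|\le|\mc{X}|+1$ --- is compact and the objective is continuous, so the maximum is attained at some $K,U$; these satisfy (1) and the cardinality bounds by construction, and since their objective value is $I(X:Y|Z)$, the previous paragraph supplies (2)--(4). Conversely, given $K,U$ satisfying (1)--(4) within the cardinality bounds, the identity and the previous paragraph give $I(K:Y|U)-I(K:Z|U)=I(X:Y|Z)$, so Lemma~\ref{Lem:AC-Lemma} forces $\overrightarrow{K}(X:Y||Z)\ge I(X:Y|Z)$, and the introduction's upper bound closes the gap.

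The main obstacle here is really bookkeeping rather than mathematics: the substance sits in the displayed identity and in the two cited facts (the single-letter formula of Ahlswede and Csisz\'{a}r, and $\overrightarrow{K}(X:Y||Z)\le I(X:Y|Z)$). The only points deserving genuine care are checking that the third equality in the chain uses nothing beyond condition (1), and being explicit that the supremum in Lemma~\ref{Lem:AC-Lemma} is achieved, so that the forward direction produces honest witnesses rather than a limiting sequence.
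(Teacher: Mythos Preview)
Your proposal is correct and follows exactly the paper's approach: the paper presents the displayed identity and then simply writes ``From this and Lemma~\ref{Lem:AC-Lemma}, we conclude the following,'' leaving the reader to supply the nonnegativity-of-the-three-terms argument and the two directions you spell out. Your added remark about attainment of the maximum is a welcome bit of care that the paper leaves implicit in writing ``$\max$'' rather than ``$\sup$'' in Lemma~\ref{Lem:AC-Lemma}.
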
 

The conditions of Lemma \ref{Lem:Markov-conds} allow for the follow rough interpretation.  (1) says that Alice is able to generate variables $K$ and $U$ from knowledge of her variable $X$.  We think of $K$ as containing the key that Alice and Bob will share and $U$ as the public message sent from Alice to Bob.  (2) says that from Eve's perspective, Alice and Bob share no more correlations given $U$ and $K$.  Likewise, (3) says that from Eve's perspective, the public message is uncorrelated with Bob.  Finally, (4) says that after learning $U$, Bob can generate the key $K$ that is independent from Eve.

Unfortunately, Lemma \ref{Lem:Markov-conds} does not provide a transparent characterization of the distributions for which $\KAB=I(X:Y|Z)$.  We next proceed to obtain a better picture of these distributions by exploring additional consequences of the Markov chains in Eq. \eqref{Eq:EqualityConds}.  The following places a necessary condition on the distributions.  We will see in Section \ref{Sect:CC-complexity}, however, that it fails to be sufficient.

\begin{theorem}
\label{Thm:One-way-conditions1}
If distribution $p_{XYZ}$ has either $\KAB=I(X:Y|Z)$ or $\KBA=I(X:Y|Z)$, then $p_{XYZ}$ must have the following property: For any $z\in\mc{Z}$, if $\mc{X}_i\times\mc{Y}_i$ and $\mc{X}_j\times\mc{Y}_j$ are two distinct blocks in the maximal common partitioning of $p_{XY|Z=z}$, then 
\[p_{XY}(\mc{X}_i,\mc{Y}_j)=0.\]
\end{theorem}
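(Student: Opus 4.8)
The plan is to argue entirely from Lemma~\ref{Lem:Markov-conds}. Assume $\KAB=I(X:Y|Z)$ and fix auxiliary variables $K,U$ satisfying the four Markov chains of \eqref{Eq:EqualityConds}; the case $\KBA=I(X:Y|Z)$ will follow by exchanging the roles of $X$ and $Y$, which produces the same conclusion because the common partitioning of $p_{XY|Z=z}$ is symmetric in $X$ and $Y$. Throughout, for a fixed $z$ I write $\mc{X}_a^z\times\mc{Y}_a^z$ for the blocks of the maximal common partitioning of $p_{XY|Z=z}$, and I use the standard fact that these blocks are the connected components of the bipartite support graph of $p_{XY|Z=z}$ (Proposition~\ref{Prop:Ergodic}), hence are pairwise $\mc{X}$- and $\mc{Y}$-disjoint and cover $\mathrm{supp}\,p_{X|Z=z}$ and $\mathrm{supp}\,p_{Y|Z=z}$.

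\emph{Step 1 (block confinement).} I would first show that for every $(k,u)$ with $p(k,u,z)>0$ there is an index $\beta=\beta(k,u,z)$ with $\mathrm{supp}\,p_{X|k,u,z}\subseteq\mc{X}_\beta^z$ and $\mathrm{supp}\,p_{Y|k,u,z}\subseteq\mc{Y}_\beta^z$. Indeed, chain (2) of \eqref{Eq:EqualityConds} ($X-KUZ-Y$) says $X$ and $Y$ are conditionally independent given $(K,U,Z)$, so $\mathrm{supp}\,p_{XY|k,u,z}$ is a combinatorial rectangle $\mc{S}_X\times\mc{S}_Y\subseteq\mathrm{supp}\,p_{XY|Z=z}$; and a nonempty rectangle inside the support graph of $p_{XY|Z=z}$ cannot meet two distinct blocks, since between distinct blocks there are no edges (property (ii) of a common partitioning). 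This yields the common index $\beta$.

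\emph{Step 2 (transport).} Suppose for contradiction that $p_{XY}(x_0,y_0)>0$ with $x_0\in\mc{X}_i^z$, $y_0\in\mc{Y}_j^z$ and $i\neq j$. Pick $z_0$ with $p(x_0,y_0,z_0)>0$ and then $(k_0,u_0)$ with $p(k_0,u_0,x_0,y_0,z_0)>0$. Since $x_0\in\mc{X}_i^z$ it has a partner $y_1\in\mc{Y}_i^z$ with $p(x_0,y_1,z)>0$; applying chain (1) ($KU-X-YZ$) first to $(x_0,y_0,z_0)$ and then to $(x_0,y_1,z)$ gives $p(k_0,u_0\mid x_0)>0$ and hence $p(k_0,u_0,x_0,y_1,z)>0$, so Step 1 forces $\beta(k_0,u_0,z)=i$. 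In particular $p(u_0,z)>0$; combining this with $p(y_0,z)>0$ and chain (3) ($U-Z-Y$, i.e.\ $U\perp Y\mid Z$) yields $p(u_0,y_0,z)>0$. Finally, $p(k_0,u_0,y_0,z_0)>0$ gives $p(k_0\mid u_0,y_0)>0$, and together with $p(z\mid u_0,y_0)>0$ and chain (4) ($K-YU-Z$, i.e.\ $K\perp Z\mid YU$) this upgrades to $p(k_0,u_0,y_0,z)>0$. But then Step 1 puts $y_0\in\mc{Y}_{\beta(k_0,u_0,z)}^z=\mc{Y}_i^z$, contradicting $y_0\in\mc{Y}_j^z$ with $i\neq j$.

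The routine part is Step 1, which is just reading conditional independence off chain (2). The crux is Step 2: one must take the single realization $(k_0,u_0)$ that witnesses the would-be forbidden pair $(x_0,y_0)$ at Eve's outcome $z_0$ and ``transport'' it to the outcome $z$ at which the block structure is being probed, invoking chains (1), (3), (4) in exactly that order while holding $(k_0,u_0)$ fixed, so as to certify simultaneously that at $Z=z$ the pair $(k_0,u_0)$ is compatible with block $i$ (through $x_0$) and with $y_0$ (which sits in block $j$) — which Step 1 forbids. The main thing to be careful about is that all three of conditions (1), (3), (4) are genuinely needed in this sequence: (1) detaches $(k_0,u_0)$ from Bob's outcome so it can be re-paired with $(y_1,z)$, (3) is what lets the message $U$ be ``detached'' from $y_0$ across values of $Z$, and (4) is what lets the key $K$ be ``reattached'' to $y_0$ at $Z=z$.
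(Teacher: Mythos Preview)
Your proposal is correct and follows essentially the same route as the paper's proof: both arguments fix $(k,u)$ witnessing the hypothetical pair $(x_0,y_0)$, use chain~(1) to transfer $(k,u)$ to the slice $Z=z$ via a partner $y_1\in\mc{Y}_i^z$ of $x_0$, use chain~(3) to show $p(u_0,y_0,z)>0$, and chain~(4) to upgrade this to $p(k_0,u_0,y_0,z)>0$, reaching a contradiction with chain~(2). Your Step~1 (block confinement via the rectangle argument) is a slightly cleaner packaging of what the paper does at the end by computing $H(J_{XY|Z=z}\mid k,u,z)>0$ and concluding $I(X:Y\mid KUZ)>0$; the content is the same.
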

\begin{proof}
Without loss of generality, assume that $\KAB=I(X:Y|Z)$.  For distribution $p_{XY|Z=z}$ with maximal common partition $(\mc{X}_\lambda,\mc{Y}_\lambda)_{\lambda=1}^t$, consider arbitrary $(x_i,y_i)\in\mc{X}_i\times\mc{Y}_i$ and $(x_j,y_j)\in\mc{X}_j\times\mc{Y}_j$.  Note that from the definition of a maximal common partitioning, we have that $p(x_i,z)p(y_i,z)>0$, but we need not have that $p(x_i,y_i,z)>0$. 

We will prove that $p(x_i,y_j,z')=0$ for all $z'\in\mc{Z}$ (clearly this already holds when $z'=z$).  Suppose on the contrary that $p(x_i,y_j,z')>0$.  Since $p(x_i,z)>0$, there will exist some $y_i'\in\mc{Y}_i$ such that $p(x_i,y_i',z)>0$.  Then the Markov chain condition $KU-X-YZ$ implies that for some $(k,u)\in\mc{K}\times\mc{U}$ such that $p(k,u|x_i)>0$, we have
\begin{equation}
\label{Eq:Markov1}
p(k,u|x_i)=p(k,u|x_i,y_i',z)=p(k,u|x_i,y_j,z')>0.
\end{equation}
Eq. \eqref{Eq:Markov1} implies that both $p(k,u|y_i',z)>0$ and $p(k,u|y_j,z')>0$.  From $p(u|y_i',z)>0$ and the Markov chain $U-Z-Y$, we have that $p(u|y_j,z)>0$.  Then we can further derive
\begin{align}
0<p(k,u|y_j,z')&=p(u|y_j,z')p(k|u,y_j,z')\notag\\
&=p(u|y_j,z')p(k|u,y_j,z)\notag\\
&\quad\Rightarrow\quad p(k|u,y_j,z)>0,\notag\\
&\quad\Rightarrow\quad p(k,u|y_j,z)=p(k|u,y_j,z)p(u|y_j,z)>0,
\end{align}
where we have used the Markov chain $K-YU-Z$.  From the last line, we must be able to find some $x_j'\in\mc{X}_j$ such that $p(x_j',y_j,z)>0$ and $p(k,u|x_j',y_j,z)>0$.  Inverting probabilities gives that both $p(x_j',y_j|k,u,z)>0$ and $p(x_i,y_i'|k,u,z)>0$.  Hence,
\begin{align}
I(X:Y|KUZ)&= I(J_{XY|Z}X:Y|KUZ)\notag\\
&=I(X:Y|J_{XY|Z}KUZ)+ \sum_{k,u,z}H(J_{XY|Z=z}|k,u,z)p(k,u,z)>0,
\end{align}
since $H(J_{XY|Z=z}|k,u,z)>0$ because $(x_i,y_i')\in\mc{X}_i\times\mc{Y}_i$ and $(x_j',y_j)\in\mc{X}_j\times\mc{Y}_j$.  However, this strict inequality contradicts the Markov chain condition $X-KUZ-Y$.
\end{proof}

Figure \ref{Fig:One-way_Examples} (a) provides an example distribution which does not satisfy the necessary conditions of Theorem \ref{Thm:One-way-conditions1} for $I(X:Y|Z)$ to be an achievable one-way key rate.  On the other hand, Figure \ref{Fig:One-way_Examples} (b) depicts an distribution for which the conditions of the theorem are met.  However, Theorem \ref{Thm:DistOpt1} in the next section will show that both distributions (a) and (b) have $K(X:Y||Z)<I(X:Y|Z)$.

\begin{figure}[t]
   \centering
     \includegraphics[width=.75\textwidth]{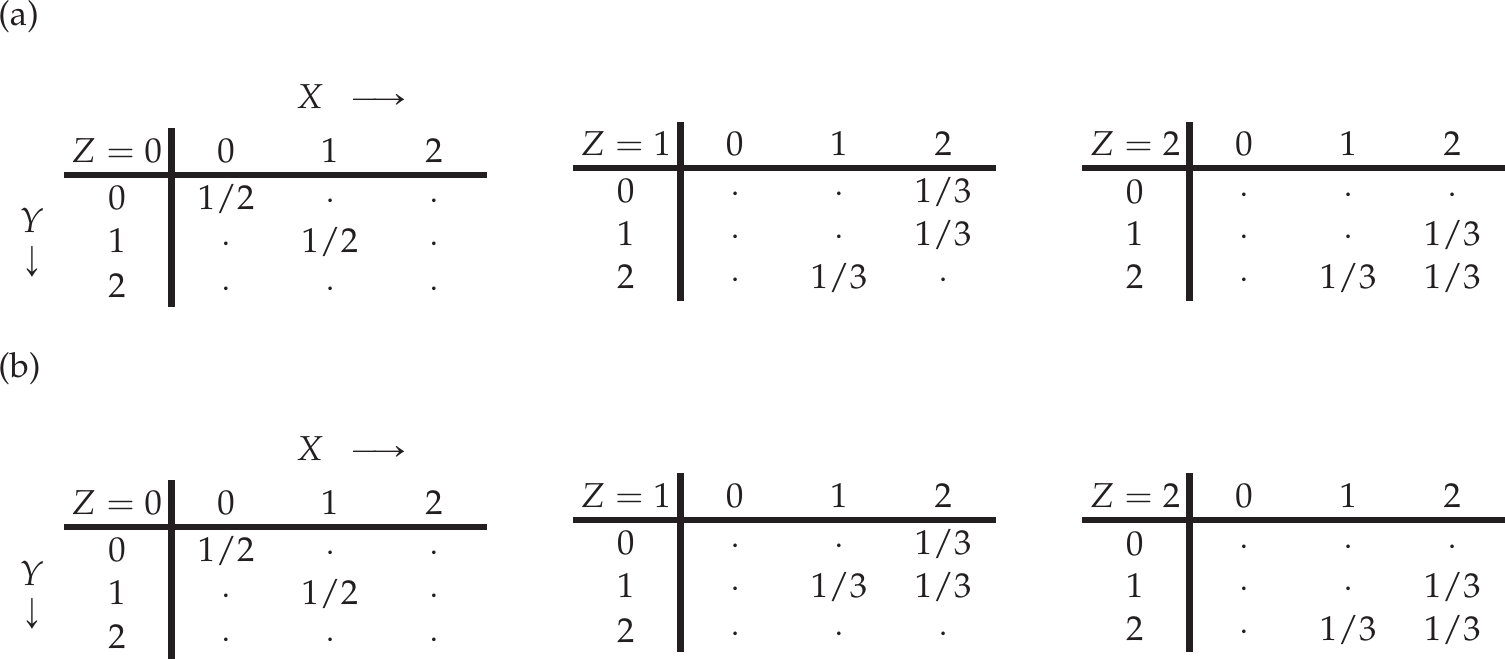}
 		\captionsetup{font={small}}
    \caption[.]{(a) The conditions for a one-way key rate of $I(X:Y|Z)$ given by Theorem \ref{Thm:One-way-conditions1} are violated for this distribution.  To see this, note that the events $(X=1,Y=2)$ and $(X=2,Y=1)$ are both possible when $Z=1$.  Hence, Theorem \ref{Thm:One-way-conditions1} necessitates $p(1,1)=0$, which is not the case because of the plane $Z=0$.  Distribution (b) lacks this characteristic and therefore it satisfies the conditions of Theorem \ref{Thm:One-way-conditions1}.}
       \label{Fig:One-way_Examples}
\end{figure}

\subsection{Obtaining $I(X:Y|Z)$ with Two-Way Communication}
\label{Sect:Two-Way}

We now turn to the general scenario of interactive two-way communication.  Our main result is the necessary structural condition of Theorem \ref{Thm:DistOpt1}.  Its statement requires some new terminology.

For two distributions $p_{XY}$ and $q_{XY}$ over $\mc{X}\times \mc{Y}$, we say that $q_{XY}\blacktriangleleft p_{XY}$ if, up to a permutation between $X$ and $Y$, the distributions satisfy $supp[q_X]\subset supp[p_X]$ and one of the three additional conditions: (i) $q_{XY}$ is uncorrelated, (ii) $supp[q_Y]\subset supp[p_Y]$, or (iii) $y\in supp[q_Y]\setminus supp[p_Y]$ implies that $H(X|Y=y)=0$. 

\begin{theorem}
\label{Thm:DistOpt1}
Let $p_{XYZ}$ be a distribution over $\mc{X}\times\mc{Y}\times\mc{Z}$ such that $p_{XY|Z=z_1}\blacktriangleleft p_{XY|Z=z_0}$ for some $z_0,z_1\in\mc{Z}$.  If there exists some pair $(x,y)\in supp[p_{X|Z=0}]\times supp[p_{Y|Z=0}]$ for which $p(x,y|z_1)>0$ but $p(x,y|z_0)=0$, then $K(X:Y||Z)<I(X:Y|Z)$.  
\end{theorem}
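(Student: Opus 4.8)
\medskip
\noindent\textbf{Proof proposal.} The plan is to produce a single stochastic post-processing $\overline{Z}\,|\,Z$ with $XY-Z-\overline{Z}$ for which $I(X:Y|\overline{Z})<I(X:Y|Z)$; since $K(X:Y||Z)\le I(X:Y\downarrow Z)\le I(X:Y|\overline{Z})$, this yields the theorem. Using the $X\leftrightarrow Y$ symmetry of both the definition of $\blacktriangleleft$ and the hypothesis on $(x,y)$, I would relabel so that the marginal containment built into $\blacktriangleleft$ reads $supp[p_{X|z_1}]\subseteq supp[p_{X|z_0}]$. The channel I would use is the one-parameter family $\overline{Z}_\epsilon$ which, for small $\epsilon>0$, maps the outcome $z_1$ to $z_0$ with probability $\epsilon$, keeps $z_1$ with the remaining probability $1-\epsilon$, and leaves every other outcome unchanged; this is a valid processing with the same range as $Z$, and $\overline{Z}_0$ reproduces $Z$. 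Writing $a=p(z_0)>0$, $b=p(z_1)>0$ and $s=\epsilon b$, the conditional law attached to the merged symbol is the mixture $Q_s:=(a\,p_{XY|z_0}+s\,p_{XY|z_1})/(a+s)$.

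A short computation of $I(X:Y|\overline{Z}_\epsilon)$ gives
\[
g(s):=I(X:Y|Z)-I(X:Y|\overline{Z}_\epsilon)=a\,I(X:Y|z_0)+s\,I(X:Y|z_1)-(a+s)\,I_{Q_s}(X:Y),
\]
with $g(0)=0$, so it remains to prove $g(s)>0$ for all sufficiently small $s>0$. Expanding $I_{Q_s}(X:Y)=H_{Q_s}(X)+H_{Q_s}(Y)-H_{Q_s}(XY)$ and regrouping, $g(s)$ becomes a finite sum of terms of the form $f(au)+f(sv)-f(au+sv)$ with $f(t)=t\log t$: one sum over the joint entries, one over the $X$-marginal entries, one over the $Y$-marginal entries, plus the scalar $f(a)+f(s)-f(a+s)$. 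Convexity of $f$ forces superadditivity, $f(t+t')\ge f(t)+f(t')$, so each bracket is $\le 0$, and the crux of the argument is that the negative marginal sums must dominate, in absolute value, the negative joint sum together with the scalar term.

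The decisive estimate is the $s\to 0^+$ behaviour. At a point where $p_{XY|z_0}$ (respectively its relevant marginal) is \emph{positive}, a Taylor expansion shows the corresponding bracket equals $s\log s$ times the $p_{XY|z_1}$-mass sitting there, up to an $O(s)$ error; at a point where $p_{XY|z_0}$ vanishes the bracket is identically $0$. Summing, one obtains
\[
g(s)=-\,s\log s\,\big(\beta_X+\beta_Y-\beta_{XY}-1\big)+O(s),
\]
where $\beta_X,\beta_Y,\beta_{XY}$ are, respectively, the $p_{XY|z_1}$-probabilities of $\{X\in supp[p_{X|z_0}]\}$, $\{Y\in supp[p_{Y|z_0}]\}$, and $\{(X,Y)\in supp[p_{XY|z_0}]\}$. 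The marginal containment supplied by $\blacktriangleleft$ forces $\beta_X=1$, so the coefficient collapses to $\beta_Y-\beta_{XY}$, namely the $p_{XY|z_1}$-probability of the event that $Y\in supp[p_{Y|z_0}]$ while $(X,Y)\notin supp[p_{XY|z_0}]$. The pair $(x,y)$ furnished by the hypothesis lies in precisely this event and has $p(x,y|z_1)>0$, so the coefficient is $\ge p(x,y|z_1)>0$. Since $-s\log s>0$ on $(0,1)$ and dominates the $O(s)$ remainder as $s\to 0^+$, we conclude $g(s)>0$ for small $s$, completing the proof.

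The step I expect to be the real obstacle is the bookkeeping in the third paragraph: tracking exactly which entropy contributions are of order $s\log s$ and which are merely $O(s)$, verifying that the brackets at points where $p_{XY|z_0}$ vanishes contribute nothing (rather than a spurious divergence), and checking that the $O(s)$ remainder carries no residual $s\log s$ of the wrong sign. It is also worth noting that this argument invokes only the marginal-support containment within $\blacktriangleleft$; conditions (i)--(iii) of that definition seem not to be needed here, and the standing nondegeneracy $p(z_0),p(z_1)>0$ is what makes the conditional distributions (and the whole construction) meaningful.
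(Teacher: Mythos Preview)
Your argument is correct and follows the same overall strategy as the paper---build the channel that sends $z_1\mapsto z_0$ with small probability and show the resulting conditional mutual information drops---but your analytic execution is genuinely different and, in one respect, sharper. The paper parametrizes by the mixing weight $t$, studies $f(t)=I(X{:}Y)_{(1-t)p_{XY|z_0}+t\,p_{XY|z_1}}$, and invokes an auxiliary calculus lemma (Proposition~\ref{Prop:calc}) after checking that $f'(t)\to-\infty$ and $f''(t)\to+\infty$ as $t\to 0^+$; to control the competing divergences coming from $y$-values in $supp[p_{Y|z_1}]\setminus supp[p_{Y|z_0}]$, it splits into the three cases (i)--(iii) of the $\blacktriangleleft$ definition and shows in each case that the bad $t\log t$ terms cancel. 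Your bracket decomposition $f(au)+f(sv)-f(au+sv)$ makes this cancellation automatic, since any bracket with $u=0$ is identically zero; the leading $-s\log s$ coefficient then emerges directly as $\beta_X+\beta_Y-\beta_{XY}-1=\beta_Y-\beta_{XY}$, which the hypothesis on $(x,y)$ forces to be strictly positive. Your remark that the extra clauses (i)--(iii) in $\blacktriangleleft$ are never used is therefore justified: only the one-sided marginal containment $supp[p_{X|z_1}]\subseteq supp[p_{X|z_0}]$ and the hypothesis pair are needed, so your route actually proves a slightly stronger statement. The bookkeeping you flagged as the obstacle is fine: the brackets at points with $u=0$ vanish exactly, and at points with $u>0$ the remainder after the $sv\log s$ term is genuinely $O(s)$ (finitely many summands, each with bounded coefficient since $au>0$), so no hidden $s\log s$ of the wrong sign survives.
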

\begin{proof}
The proof will involve showing that there exists a channel $\overline{Z}|Z$ such that $I(X:Y|\overline{Z})<I(X:Y|Z)$.  The channel will involve mixing $z_0$ and $z_1$ but leaving all other elements unchanged.  Define the function
\begin{equation}
\label{Eq:InfoMix}
f(t)=I(X:Y)_{(1-t)p_{XY|Z=z_0}+t p_{XY|Z=z_1}}\qquad t\in[0,1],
\end{equation}
which gives the mutual information of the mixed distribution $(1-t)p_{XY|Z=z_0}+tp_{XY|Z=z_1}$.  The function $f$ is continuous and twice differentiable in the open interval $(0,1)$.  To prove the theorem, we will need a simple general fact about functions of this sort.
\begin{proposition}
\label{Prop:calc}
Suppose that $f$ is a continuous function on the closed interval $[0,1]$ and twice differentiable in the open interval $(0,1)$.  Suppose there exists some $0<\delta<1$ such that $f$ is strictly convex in the interval $\mc{I}=(0,\delta]$ and $f(1)-f(0)>f'(t)$ for all $t\in\mc{I}$.  Then $f(t)<(1-t)f(0)+tf(1)$ for all $t\in\mc{I}$.
\end{proposition}
\begin{proof}
Introduce the linear function $g(t)=(1-t)f(0)+tf(1)$.  Note that by assumption we have $g'(t)>f'(t)$ for $t\in\mc{X}$.  We want to show that $f(t)<g(t)$ for $t\in \mc{I}$.  We have
\begin{align}
g(t)&=(1-\tfrac{t}{\delta})g(0)+\tfrac{t}{\delta}g(\delta)>(1-\tfrac{t}{\delta})f(0)+\tfrac{t}{\delta}f(\delta)>f(t).
\end{align}
Here, the first inequality follows from the facts that $f(0)=g(0)$ and $0>g'(t)>f'(t)$ for $t\in\mc{I}$ (so $g(\delta)>f(\delta)$); and the second inequality uses the strict convexity of $f$ in $\mc{I}$.
\end{proof}

Continuing with the proof of Theorem \ref{Thm:DistOpt1}, it will suffice to show that the function given by Eq. \eqref{Eq:InfoMix} satisfies the conditions of Proposition \ref{Prop:calc}.  For if this is true, then we can argue as follows.  Choose $\epsilon$ sufficiently small so that $\tfrac{\epsilon p(z_1)}{p(z_0)+\epsilon p(z_1)}\in(0,\delta]$, where $\delta$ is described by the proposition.  Define the channel $\overline{Z}|Z$ by  $p(\overline{z}_0|z_1)=\epsilon$, $p(\overline{z}_1|z_1)=1-\epsilon$, and $p(\overline{z}|z)=1$ for all $z\not=z_1\in\mc{Z}$.  This means that $p(\overline{z}_0)=p(z_0)+\epsilon p(z_1)$ and $p(\overline{z}_1)=(1-\epsilon)p(z_1)$, and inverting the probabilities gives $p(z_1|\overline{z}_1)=1$, $p(z_1|\overline{z}_0)=\tfrac{\epsilon p(z_1)}{p(z_0)+\epsilon p(z_1)}$, and $p(z_0|\overline{z}_0)=\tfrac{p(z_0)}{p(z_0)+\epsilon p(z_1)}$.  Since $p(x,y|\overline{Z}=\overline{z})=\sum_zp(x,y|Z=z)p(Z=z|\overline{Z}=\overline{z})$, the average conditional mutual information is 
\begin{align}
&\sum_{z\not=z_0,z_1\in\mc{Z}}I(X:Y|\overline{Z}=\overline{z})p(\overline{z})+f(\tfrac{\epsilon p(z_1)}{p(z_0)+\epsilon p(z_1)})p(\overline{z}_0)+f(1)p(\overline{z}_1)\notag\\
&< \sum_{z\not=z_0,z_1\in\mc{Z}}I(X:Y|Z=z)p(z)+\left(\tfrac{p(z_0)}{p(z_0)+\epsilon p(z_1)}f(0)+\tfrac{\epsilon p(z_1)}{p(z_0)+\epsilon p(z_1)} f(1)\right)p(\overline{z}_0)+f(1)(1-\epsilon)p(z_1)\notag\\
&=I(X:Y|Z),
\end{align}
where Proposition \ref{Prop:calc} at $x=\tfrac{\epsilon p(z_1)}{p(z_0)+\epsilon p(z_1)}$ has been invoked.

Let us then show that the conditions of Proposition \ref{Prop:calc} hold true for the function given by Eq. \eqref{Eq:InfoMix} whenever $p_{XY|Z=z_1}\blacktriangleleft p_{XY|Z=z_0}$; i.e. that there exists some interval $(0,\delta]$ for which $f$ is strictly convex and $f(1)-f(0)>f'(t)$.  We have
\begin{align}
\label{Eq:MI-combo1}
f(t)=&-\sum_{x\in\mc{X}}[(1-t)p(x|z_0)+tp(x|z_1)]\log [(1-t)p(x|z_0)+tp(x|z_1)]\notag\\
&-\sum_{y\in\mc{Y}}[(1-t)p(y|z_0)+tp(y|z_1)]\log [(1-t)p(y|z_0)+tp(y|z_1)]\notag\\
&+\sum_{x\in\mc{X}}\sum_{y\in\mc{Y}} [(1-t)p(x,y|z_0)+tp(x,y|z_1)]\log [ (1-t)p(x,y|z_0)+tp(x,y|z_1)].
\end{align}
We are interested in $\lim_{t\to 0}f'(t)$ and $\lim_{t\to 0}f''(t)$.  To compute these, we use the fact that the function $g(t)=(r+s t)\log(r+st)$ satisfies $g'(t)=s (1 + \log(r + s t))$ and $g''(t)=\frac{s^2}{r+st}$.  We separate the analysis into three cases.  Without loss of generality, we will assume $supp[p_{X|Z=z_1}]\subset supp[p_{X|Z=z_0}]$.

\medskip

\noindent\textbf{Case (i): $\boldsymbol{p_{XY|Z=z_1}}$ is uncorrelated.}

Since $supp[p_{X|Z=z_1}]\subset supp[p_{X|Z=z_0}]$, we can assume that $p(x|z_0)\not=0$ for all $x$; otherwise there is no term involving $x$ in Eq. \eqref{Eq:MI-combo1}.  Now suppose that $p(y|z_0)=0$.  Then for this fixed $y$, the summation over $x$ in the third term of Eq. \eqref{Eq:MI-combo1} becomes
\begin{align}
&\sum_{x\in\mc{X}} [(1-t)p(x,y|z_0)+tp(x,y|z_1)]\log [ (1-t)p(x,y|z_0)+tp(x,y|z_1)]\notag\\
&=t\sum_{x\in\mc{X}} p(x|z_1)p(y|z_1)\log [ tp(x|z_1)p(y|z_1)]\notag\\
&=tp(y|z_1)\log [ tp(y|z_1)]+tp(y|z_1)\sum_{x\in\mc{X}}p(x|z_1)\log [ p(x|z_1)].
\end{align}
Hence, by letting $\mc{B}_I=\{y:p(y|z_I)>0\}$ for $I\in\{0,1\}$, we can equivalently write Eq. \eqref{Eq:MI-combo1} as
\begin{align}
\label{Eq:MI-combo2}
f(t)=&-\sum_{x\in\mc{X}}[(1-t)p(x|z_0)+tp(x|z_1)]\log [(1-t)p(x|z_0)+tp(x|z_1)]\notag\\
&-\sum_{y\in\mc{B}_0}[(1-t)p(y|z_0)+tp(y|z_1)]\log [(1-t)p(y|z_0)+tp(y|z_1)]\notag\\
&+\sum_{y\in\mc{B}_0}\sum_{x\in\mc{X}} [(1-t)p(x,y|z_0)+tp(x,y|z_1)]\log [ (1-t)p(x,y|z_0)+tp(x,y|z_1)]\notag\\
&+t\sum_{y\in\mc{B}_1\setminus\mc{B}_0}p(y|z_1)\sum_{x\in\mc{X}}p(x|z_1)\log [ p(x|z_1)].
\end{align}
If $p(x,y|z_0)=0$ for some $(x,y)\in\mc{X}\times\mc{B}_0$, then the first derivative of \eqref{Eq:MI-combo2} will diverge to $-\infty$ as $t\to 0$ while its second derivative will diverge to $+\infty$ whenever $p(x,y|z_1)>0$.  But by assumption, there is at least one pair of $(x,y)$ for which this latter case holds.  Hence, an interval $(0,\delta]$ can always be found for which Proposition \ref{Prop:calc} can be applied to $f$.

\medskip

\noindent\textbf{Case (ii): $\boldsymbol{\mc{B}_1\setminus\mc{B}_0=\emptyset}$.}  

This is covered in case (iii).

\medskip

\noindent\textbf{Case (iii): $\boldsymbol{y\in \mc{B}_1\setminus\mc{B}_0\;\Rightarrow\; p(y|z_1)=p(x_y,y|z_1)}$ for some particular $\boldsymbol{x_y\in\mc{X}}$.}

The condition $p(y|z_1)=p(x_y,y|z_1)$ implies that $p(x,y|z_1)=0$ for all $x\not=x_y$.  Then similar to the previous case, when $y\in\mc{B}_1\setminus\mc{B}_0$, the summation over $x$ in the third term of Eq. \eqref{Eq:MI-combo1} is
\begin{align}
\sum_{x\in\mc{X}} tp(x,y|z_1)\log [ tp(x,y|z_1)]&=tp(x_y,y|z_1)\log [ tp(x_y,y|z_1)]\notag\\
&=tp(y|z_1)\log [ tp(y|z_1)].
\end{align}
Hence each term with $y\in\mc{B}_1\setminus\mc{B}_0$ becomes canceled in Eq. \eqref{Eq:MI-combo1}.  Then Eq. \eqref{Eq:MI-combo1} reduces to
\begin{align}
\label{Eq:MI-combo3}
f(t)=&-\sum_{x\in\mc{X}}[(1-t)p(x|z_0)+tp(x|z_1)]\log [(1-t)p(x|z_0)+tp(x|z_1)]\notag\\
&-\sum_{y\in\mc{B}_0}[(1-t)p(y|z_0)+tp(y|z_1)]\log [(1-t)p(y|z_0)+tp(y|z_1)]\notag\\
&+\sum_{x\in\mc{X}}\sum_{y\in\mc{B}_0} [(1-t)p(x,y|z_0)+tp(x,y|z_1)]\log [ (1-t)p(x,y|z_0)+tp(x,y|z_1)].
\end{align}
As in the previous case, the first derivative of this function will diverge to $-\infty$ while its second derivative will diverge to $+\infty$ whenever $p(x,y|z_1)>0$ and $p(x,y|z_0)=0$.  By assumption, such a pair $(x,y)$ exists, and so again, an interval $(0,\delta]$ can always be found for which Proposition \ref{Prop:calc} can be applied to $f$.  Note that when $\mc{B}_1\setminus\mc{B}_0=\emptyset$, as in case (ii), Eq. \eqref{Eq:MI-combo3} is equivalent to \eqref{Eq:MI-combo1}.  The derivative argument can thus be applied directly to \eqref{Eq:MI-combo1}.
\end{proof}

Theorem \ref{Thm:DistOpt1} is quite useful in that it allows us to quickly eliminate many distributions from achieving the rate $I(X:Y|Z)$.  For example, consider when $p_{XY|Z=z}$ is uncorrelated for some $z\in\mc{Z}$, but $p_{XY|Z=z'}$ is perfectly correlated for some other $z'\in\mc{Z}$ with either $supp[p_{X|Z=z}]\subset supp[p_{X|Z=z'}]$ or $supp[p_{Y|Z=z}]\subset supp[p_{Y|Z=z'}]$.  Here, perfectly correlated means that $p(x,y|z')=p(x|z')\delta_{x,y}$ up to relabeling.  Then from Theorem \ref{Thm:DistOpt1}, it follows that $I(X:Y|Z)$ is an achievable rate only if 
\[p(x,y|z)>0\quad\Rightarrow\quad  p(x|z')p(y|z')=0.\]
In other words, it is always possible for either Alice or Bob to identify when $Z\not=z'$. 

Finally, we close this section by comparing Theorems \ref{Thm:One-way-conditions1} and \ref{Thm:DistOpt1}.  In short, neither one supersedes the other.  As noted above, distribution (b) in Fig. \ref{Fig:One-way_Examples} satisfies the necessary condition of Theorem \ref{Thm:One-way-conditions1} for $\overrightarrow{K}(X:Y||Z)=I(X:Y|Z)$.  However, Theorem \ref{Thm:DistOpt1} can be used to show that $K(X:Y||Z)<I(X:Y|Z)$.  This is because $p_{XY|Z=1}\blacktriangleleft p_{XY|Z=2}$ yet $p(1,1|2)=0$ while $p(1,1|1)=1/3$.  Therefore its key rate is strictly less than $I(X:Y|Z)$.  Figure \ref{Fig:two-way-counter} depicts a distribution for which Theorem \ref{Thm:DistOpt1} cannot be applied but Theorem \ref{Thm:One-way-conditions1} shows that $\KAB<I(X:Y|Z)$.  The two-way key rate for this distribution is still unknown.

\begin{figure}[t]
   \centering
     \includegraphics[width=.75\textwidth]{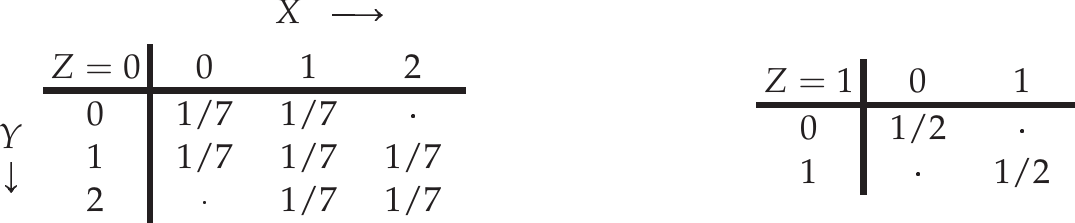}
 		\captionsetup{font={small}}
    \caption[.]{The event $(x,y)=(0,1)$ has conditional probabilities $p(0,1|Z=0)>0$ and $p(0,1|Z=1)=0$.  However, we cannot use these facts in conjunction with Theorem \ref{Thm:DistOpt1} to conclude that $K(X:Y||Z)<I(X:Y|Z)$ since the distribution does not satisfy $p_{XY|Z=0}\blacktriangleleft p_{XY|Z=1}$ (neither $supp[p_{X|Z=0}]\subset supp[p_{X|Z=1}]$ nor $supp[p_{Y|Z=0}]\subset supp[p_{Y|Z=1}]$).  On the other hand, since $p(0,1|Z=0)>0$, Theorem \ref{Thm:One-way-conditions1} can be applied to conclude that the one-way rate is less than $I(X:Y|Z)$.}
       \label{Fig:two-way-counter}
\end{figure}

\subsection{Communication Dependency in Optimal Distillation}

\label{Sect:CC-complexity}

We next consider some general features of the public communication when performing optimal key distillation.  Our main observations will be that (i) attaining a key rate of $I(X:Y|Z)$ by one-way communication may depend on the direction of the communication, and (ii) two-way communication may be necessary in order to achieve the key rate $I(X:Y|Z)$.  

\begin{figure}[b]
   \centering
     \includegraphics[width=.75\textwidth]{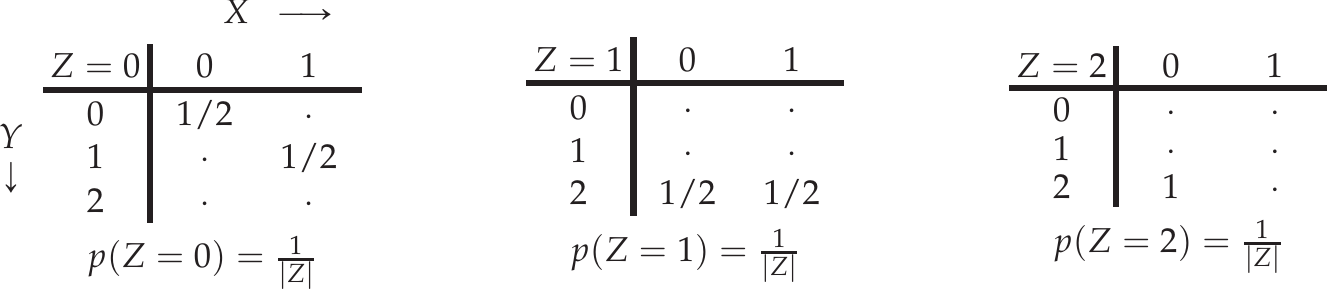}
 		\captionsetup{font={small}}
        \caption[.]{A distribution requiring communication from Bob to Alice to achieve a key rate of $I(X:Y|Z)$.}
       \label{Fig:One-way-counterexample}
\end{figure}

\begin{example}[Optimal one-way distillation depends on communication direction]
Consider the distribution depicted in Fig. \ref{Fig:One-way-counterexample} with $I(X:Y|Z)=1/3$.  When Bob is the communicating party, a protocol attaining this as a key rate is obvious: he simply announces whether or not $y\in\{0,1\}$.  If it is, they share one bit, otherwise they fail.  Hence, $I(X:Y|Z)=1/3$ is an achievable key rate.

However, the interesting question is whether or not the key rate $I(X:Y|Z)$ is achievable by one-way communication from Alice to Bob.  We will now show that this is not possible.  By Lemma \ref{Lem:Markov-conds}, in order to obtain the rate $I(X:Y|Z)$, there must exist random variables $U$ and $V$ satisfying Eq. \eqref{Eq:EqualityConds}.  Assume that such variables exist.  If $U-Z-Y$, then $p(u|X=0)p(u|X=1)>0$ for all $U=u$; otherwise, $U$ and $Y$ couldn't be independent.  But then $X-KUZ-Y$ applied to $Z=0$ means there must exist a pair $(k,u)\in\mathcal{K}\times\mathcal{U}$ such that 
\[p(k,u|X=0)=0\quad\&\quad p(k,u|X=1)>0.\]
Hence, $0=p(k|Y=2,U=u,Z=2)<p(k|Y=2,U=u,Z=1)$, which contradicts $K-YU-Z$.  Thus $\KAB<I(X:Y|Z)= \KBA$.  
\end{example}

In this example, notice that if we restricted Eve's distribution to $\mc{Z}=\{0,1\}$ (i.e $p(Z=2)=0$), then the rate $I(X:Y|Z)$ would indeed be achievable using one-way communication from Alice to Bob.  This is because without the $z=2$ outcome, the Markov Chain $X-Y-Z$ holds.  Such a result is counter-intuitive since Alice and Bob share no correlations when $z\in\{1,2\}$.  And yet the distribution becomes one-way reversible from Alice to Bob when $p(Z=2)=0$, but otherwise it is not.

\begin{figure}[t]
   \centering
     \includegraphics[width=.75\textwidth]{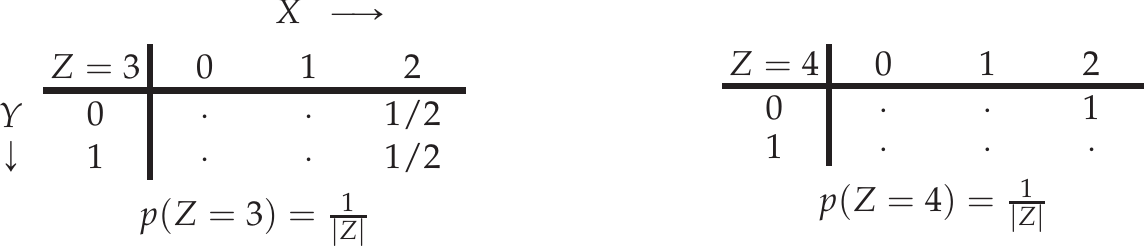}
 		\captionsetup{font={small}}
        \caption[.]{Additional outcomes augmented to the distribution of Fig. \ref{Fig:One-way-counterexample}.  The enlarged distribution can no longer attain a key rate of $I(X:Y|Z)$ unless both parties communicate.}
       \label{Fig:One-way-counterexample-2way}
\end{figure}

\begin{example}[Optimal distillation requires two-way communication]
The previous example can be generalized by adding two more outcomes for Eve so that $|Z|=5$.  The additional outcomes are shown in Fig. \ref{Fig:One-way-counterexample-2way} and this is combined with Fig. \ref{Fig:One-way-counterexample} to give the full distribution.  Notice that the distribution $p_{XY|Z=3}$ is obtained from $p_{XY|Z=1}$ simply by swapping Alice and Bob's variables, and likewise for $p_{XY|Z=4}$ and $p_{XY|Z=2}$.  Hence by the argument of the previous example, if Eve were to reveal whether or not $z\in\{0,3,4\}$, then the average Bob-to-Alice distillable key conditioned on this information would be less than $I(X:Y|Z)$.  Likewise, if Eve were to reveal whether or not $z\in\{0,1,2\}$, then the Alice-to-Bob distillable key conditioned on this information would be less than $I(X:Y|Z)$.  Thus since the average conditional key rate cannot exceed the key rate with no side information, we conclude that $I(X:Y|Z)$ is unattainable using one one-way communication in either direction.  On the other hand, the distribution is easily seen to admit a key rate of $I(X:Y|Z)$ when the parties simply announce whether or not their variable belongs to the set $\{0,1\}$.
\end{example}

\section{Conclusion}

\label{Sect:Conclusions}

In this paper, we have considered when a secret key rate of $I(X:Y|Z)$ can be attained by Alice and Bob when working with a variety of auxiliary resources.  The conditional mutual information quantifies the private key rate of $p_{XYZ}$, which is the rate of key private from Eve that is attainable when Eve helps Alice and Bob by announcing her variable.  Therefore, distributions for which $K(X:Y||Z)=I(X:Y||Z)$ are those for which no assistance is provided by Eve when she functions as a helper rather than a full adversary.  

We have found that with no additional communication, the key rate is $I(X:Y|Z)$ if and only if the distribution is uniform block independent.  Furthermore, supplying Alice and Bob with additional public randomness does not increase the distillable key rate.  While this may not be overly surprising since the considered common randomness is uncorrelated with the source, it is nevertheless a nontrivial result because in general, randomness can serve a resource in distillation tasks \cite{Ahlswede-1993a, Ozols-2014a}.

Turning to the one and two-way communication scenarios, we have presented in Theorems \ref{Thm:One-way-conditions1} and \ref{Thm:DistOpt1} necessary conditions for a distribution to attain the key rate $I(X:Y|Z)$.  The conditions we have derived are all single-letter structural characterizations, and they are thus computationally easy to apply.  We leave open the question of whether Theorem \ref{Thm:DistOpt1} is also sufficient for attaining $I(X:Y|Z)$, although we have no strong reason to believe this is true.  Further improvements to the results of this paper can possibly be obtained by studying tighter bounds on $K(X:Y||Z)$ than the intrinsic information such as those presented in Refs. \cite{Renner-2003a} and \cite{Gohari-2010a}.  Nevertheless, we hope this paper has shed new light on the problem of secret key distillation under various communication settings.  

\section{Acknowledgments}
\label{Acknowledgments}

EC was supported by the National Science Foundation (NSF) Early CAREER Award No. 1352326. MH is supported by an ARC Future Fellowship under Grant FT140100574.

\section{Appendix}

\label{Appendix}

\subsection{Proof of Propositions \ref{Prop:Partition-Unique} and \ref{Prop:Ergodic}}
  
\begin{proposition*}
\begin{itemize}
\item[{}]
\item[(a)]  Every pair of finite random variables $XY$ has a unique maximal common partitioning.
\item[(b)]  Variable $J_{XY}$ satisfies
\[H(J_{XY})=\max_K\{H(K):0=H(K|X)=H(K|Y)\}\]
iff $J_{XY}$ is a common function for the maximal common partitioning of $XY$.
\item[(c)] If $f(X)=g(Y)=C$ is any other common function of $X$ and $Y$, then $C(J_{XY})$.
\end{itemize}
\end{proposition*}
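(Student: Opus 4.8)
The plan is to translate everything into the language of the bipartite ``support graph'' $G$: its vertex set is $\{x\in\mc{X}:p_X(x)>0\}\cup\{y\in\mc{Y}:p_Y(y)>0\}$, and $x$ and $y$ are joined by an edge precisely when $p(x,y)>0$. First I would check that the connected components of $G$ form a common partitioning: writing $\mc{X}_i,\mc{Y}_i$ for the $\mc{X}$- and $\mc{Y}$-vertices of the $i$-th component, disjointness is immediate, property (iii) holds because every vertex of $G$ has positive marginal by construction, and property (ii) holds because an edge can only join vertices of one component, so $p(x,y)=0$ whenever $x\in\mc{X}_i$, $y\in\mc{Y}_j$ with $i\neq j$; moreover every vertex has a neighbour, so each component contains an edge and therefore a $y$ with $p_Y(y)>0$, making $p(\mc{Y}_i)>0$ (and likewise $p(\mc{X}_i)>0$) so that the conditional probabilities in (ii) are legitimate.

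For part (a) I would then show this is \emph{the} maximal common partitioning. The key point is a propagation property forced by (ii): if $p_X(x_0)>0$ but $x_0$ lies in no block, then summing $p(\mc{X}_i\mid\mc{Y}_j)=\delta_{ij}$ over $i$ gives $p(\bigcup_i\mc{X}_i\mid\mc{Y}_j)=1$, hence $p(x_0,y)=0$ for every assigned $y$; symmetrically for an unassigned $y_0$. Thus every edge of $G$ joins two assigned or two unassigned vertices, so an unassigned vertex has its entire $G$-component unassigned, and adjoining that component as a new block strictly lengthens the partitioning. Consequently a maximal common partitioning covers the whole support; once it does, the same propagation shows each of its blocks is a union of $G$-components, and maximality forces each block to be a single component. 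This identifies the maximal common partitioning with the component partition, yielding both existence and uniqueness, and it simultaneously proves Proposition~\ref{Prop:Ergodic}: $J_{XY}(x)=J_{XY}(x')$ means $x,x'$ lie in one $G$-component, i.e.\ are joined by a path alternating between $\mc{X}$- and $\mc{Y}$-vertices through positive-probability pairs.

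Parts (b) and (c) then reduce to one lemma: if $f(X)=g(Y)=C$ almost surely with $H(C\mid X)=H(C\mid Y)=0$, then $C$ is constant on each $G$-component. Indeed $p(x,y)>0$ forces $f(x)=g(y)$ (else the event $f(X)\neq g(Y)$ has positive probability), so along the path $x\,y_1\,x_1\,y_2\cdots y_n\,x'$ of Proposition~\ref{Prop:Ergodic} one telescopes $f(x)=g(y_1)=f(x_1)=\cdots=f(x')$; hence $C$ depends only on the component index and is a function of $J_{XY}$, which is (c). For (b): the component-labelling $J_{XY}$ is itself such a common function with $H(J_{XY}\mid X)=H(J_{XY}\mid Y)=0$, so by (c) every admissible $K$ is a function of $J_{XY}$ and $H(K)\le H(J_{XY})$; thus $\max_K\{H(K):0=H(K\mid X)=H(K\mid Y)\}=H(J_{XY})$, and any $K$ attaining the maximum is a function of $J_{XY}$ of equal entropy, hence interchangeable with it, i.e.\ a common function for the maximal common partitioning.

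I expect the only genuine obstacle to be the bookkeeping in part (a): carefully arguing that a maximal common partitioning must cover the entire support and cannot be a strict coarsening of the component partition, all while keeping track of zero-probability rows and columns so that every conditional probability appearing in property (ii) is well defined. Once that is settled, Proposition~\ref{Prop:Ergodic} and the telescoping argument make (b) and (c) essentially immediate.
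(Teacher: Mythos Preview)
Your proof is correct and takes a genuinely different route from the paper's. For (a) the paper never introduces the support graph; it argues existence trivially (the whole support is a length-one partitioning, and length is bounded by $\min\{|\mc{X}|,|\mc{Y}|\}$) and then proves uniqueness by a refinement contradiction: given two distinct maximal common partitionings of the same length, it intersects a block $\mc{X}_{i_0}\times\mc{Y}_{i_0}$ of one with a block $\mc{X}'_{\lambda_0}\times\mc{Y}'_{\lambda_0}$ of the other to split $\mc{X}_{i_0}\times\mc{Y}_{i_0}$ into two nonempty pieces, yielding a common partitioning of length $t+1$. Proposition~\ref{Prop:Ergodic} is then proved separately by an explicit breadth-first construction of growing neighbour sets $S_k,T_k$. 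For (c) the paper gives a one-line entropy argument: if $C$ were not a function of $J_{XY}$ then the pair $(C,J_{XY})$ would be a common function with $H(C,J_{XY})>H(J_{XY})$, contradicting the maximum established in (b). Your approach instead identifies the maximal partitioning constructively as the connected-component decomposition of $G$, which makes Proposition~\ref{Prop:Ergodic} an immediate corollary and lets a single telescoping lemma handle both (b) and (c). The payoff of your route is a unified picture in which (a), (b), (c) and Proposition~\ref{Prop:Ergodic} all fall out of one object; the payoff of the paper's route is that each individual step (the refinement for (a), the entropy trick for (c)) is very short, at the cost of treating the four statements more independently.
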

\begin{proof}
(a) Trivially $\mc{X}\times\mc{Y}$ gives a common partitioning of length one, and any common partitioning cannot have length exceeding $\min\{|\mc{X}|,|\mc{Y}|\}$; hence a maximal common partitioning exists.  To prove uniqueness, suppose that
$(\mc{X}_i,\mc{Y}_i)_{i=1}^t$ and $(\mc{X}'_i,\mc{Y}_i')_{i=1}^t$ are two maximal common partitionings.  If they are not equivalent, then there must exist some subset, say $\mc{X}_{i_0}$ such that $\mc{X}_{i_0}\subset\cup_{\lambda=1}^K\mc{X}_\lambda'$ in which $\mc{X}_{i_0}\cap\mc{X}'_{\lambda}\not=\emptyset$ for $\lambda=1,\cdots,K\geq 2$.  Choose any such $\mc{X}'_{\lambda_0}$ from this collection and define the new sets $R_{i_0}=\mc{X}_{i_0}\cap\mc{X}'_{\lambda_0}$ and $\tilde{R}_{i_0}=\mc{X}_{i_0}\setminus\mc{X}'_{\lambda_0}$, which are both nonempty since $k\geq 2$ and the $\mc{X}_\lambda$ are disjoint.  However, we also have the properties
\begin{align}
x\in\mc{X}_{i_0}&\Rightarrow p(\mc{Y}_{i_0}|x)=1;& x\in\mc{X}'_{\lambda_0}&\Rightarrow p(\mc{Y}'_{\lambda_0}|x)=1; \notag\\
x\not\in\mc{X}_{i_0} &\Rightarrow p(\mc{Y}_{i_0}|x)=0;&x\not\in\mc{X}'_{\lambda_0} &\Rightarrow p(\mc{Y}'_{\lambda_0}|x)=0.\notag
\end{align}
(Here we are implicitly using condition (iii) in the above definition by assuming that $p(x)>0$ thereby defining conditional distributions).  Therefore, $p(S_{i_0}|R_{i_0})=p(\tilde{S}_{i_0}|\tilde{R}_{i_0})=1$ and $p(S_{i_0}|\tilde{R}_{i_0})=p(\tilde{S}_{i_0}|R_{i_0})=0$, where $S_{i_0}=\mc{Y}_{i_0}\cap\mc{Y}'_{\lambda_0}$ and $\tilde{S}_{i_0}=\mc{Y}_{i_0}\setminus\mc{Y}'_{\lambda_0}$.  A similar argument shows that $p(R_{i_0}|S_{i_0})=p(\tilde{R}_{i_0}|\tilde{S}_{i_0})=1$ and $p(R_{i_0}|\tilde{S}_{i_0})=p(\tilde{R}_{i_0}|S_{i_0})=0$.  Hence, $(\mc{X}_i,\mc{Y}_i)_{i\not=i_0}^t\bigcup (S_{i_0},R_{i_0})\bigcup (\tilde{S}_{i_0},\tilde{R}_{i_0})$ is a common partitioning of length $t+1$.  But this is a contradiction since $(\mc{X}_i,\mc{Y}_i)_{i=1}^t$ is a maximal common decomposition.

(b)  Suppose that $K$ satisfies $0=H(K|X)=H(K|Y)$ so that $K=f(X)=g(Y)$ for some functions $f$ and $g$.  It is clear that $f$ and $g$ must be constant-valued for any pair of values taken from same block $\mc{X}_i\times\mc{Y}_i$ in the maximal common partitioning of $XY$.  Hence the maximum possible entropy of $K$ is then attained iff $f$ and $g$ take on a different value for each block in this partitioning.

(c)  Suppose that $C$ is not a function of $J_{XY}$.  Then $H(CJ_{XY})>H(J_{XY})$, which contradicts the maximality of $J_{XY}$.
\end{proof}

\begin{proposition*}
If $J_{XY}(x)=J_{XY}(x')$ for $x,x'\in J_{XY}$, then there exists a sequence of values
\[xy_1x_1y_2x_2\cdots y_n x'\]
such that $p(xy_1)p(y_1x_1)p(x_1y_2)\cdots p(y_nx')>0$.
\end{proposition*}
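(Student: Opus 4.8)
The plan is to produce the maximal common partitioning of $p_{XY}$ by hand, as the ``connected component'' partition of the bipartite graph with vertex classes $supp[p_X]$ and $supp[p_Y]$ and an edge between $x$ and $y$ whenever $p(x,y)>0$, and then to read off the claim from Proposition~\ref{Prop:Partition-Unique}(c). First I would define a relation on $supp[p_X]$ by declaring $x\approx x'$ exactly when a chain $x\,y_1\,x_1\,y_2\cdots y_n\,x'$ exists with $p(x,y_1)p(y_1,x_1)\cdots p(y_n,x')>0$. This is an equivalence relation: reflexivity uses that every $x\in supp[p_X]$ has some $y$ with $p(x,y)>0$, so the chain $x\,y\,x$ (i.e.\ $n=1$) works; symmetry is reversal of a chain; transitivity is concatenation. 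Write $\mc{X}_1,\mc{X}_2,\dots$ for the classes of $\approx$ and put $\mc{Y}_i:=\{y\in supp[p_Y]:p(x,y)>0\text{ for some }x\in\mc{X}_i\}$.

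Next I would check that $(\mc{X}_i,\mc{Y}_i)_i$ defines a common function $C$. The $\mc{X}_i$ partition $supp[p_X]$ by construction. The $\mc{Y}_i$ are pairwise disjoint: if $y\in\mc{Y}_i\cap\mc{Y}_j$, choose $x\in\mc{X}_i$ and $x'\in\mc{X}_j$ with $p(x,y),p(x',y)>0$; then the chain $x\,y\,x'$ shows $x\approx x'$, hence $i=j$. Every $y\in supp[p_Y]$ lies in some $\mc{Y}_i$ (take any $x$ with $p(x,y)>0$). So the assignments $x\mapsto i$ for $x\in\mc{X}_i$ and $y\mapsto i$ for $y\in\mc{Y}_i$ consistently define a single variable $C=f(X)=g(Y)$ with $H(C|X)=H(C|Y)=0$: whenever $p(x,y)>0$ we have $x\in\mc{X}_i$ for some $i$, and then $y\in\mc{Y}_i$, so $f(x)=i=g(y)$.

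Finally, since $C$ is a common function of $X$ and $Y$, Proposition~\ref{Prop:Partition-Unique}(c) gives that $C$ is a function of $J_{XY}$. Therefore $J_{XY}(x)=J_{XY}(x')$ forces $C(x)=C(x')$, i.e.\ $x$ and $x'$ lie in the same class $\mc{X}_i$; by the definition of $\approx$ this is precisely the existence of the asserted chain $x\,y_1\,x_1\cdots y_n\,x'$. The only step that carries any content is the verification that $C$ is a well-defined common function---essentially the pairwise disjointness of the $\mc{Y}_i$---and I expect that to be the sole (mild) obstacle; the passage to Proposition~\ref{Prop:Partition-Unique}(c) is then immediate. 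One could instead argue directly that $(\mc{X}_i,\mc{Y}_i)_i$ is a \emph{maximal} common partitioning and invoke uniqueness, but since only the one-directional implication stated here is needed, the detour through maximality can be avoided.
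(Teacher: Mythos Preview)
Your proof is correct, and it takes a somewhat different route from the paper's. The paper argues locally: starting from the single element $x$, it performs an iterative breadth-first expansion through the bipartite graph, obtaining sets $\overline{S}\ni x$ and $\overline{T}$ with $p(\overline{S}|\overline{T})=p(\overline{T}|\overline{S})=1$, and then asserts (implicitly invoking that the maximal common partitioning must refine the two-block partition $(\overline{S},\overline{T})\cup(\mc{X}\setminus\overline{S},\mc{Y}\setminus\overline{T})$) that $J_{XY}(x)=J_{XY}(x')$ forces $x'\in\overline{S}$, after which the chain is read off from the construction. You instead argue globally: you build \emph{all} connected components at once as equivalence classes of the chain relation, package them as a single common function $C$, and then invoke Proposition~\ref{Prop:Partition-Unique}(c) explicitly to get that $C$ factors through $J_{XY}$. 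The underlying idea---connected components of the bipartite support graph---is the same, but your version is cleaner in that the appeal to Proposition~\ref{Prop:Partition-Unique}(c) is made explicit rather than left for the reader to supply, and you avoid the inductive set-building. The paper's version, on the other hand, is slightly more constructive in that the chain emerges directly from the BFS layers.
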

\begin{proof}
Define the sets
\begin{align}
S_0&=\{x\},&T_1&=\{y:p(y|S_0)>0\}\notag\\
S_1&=\{x\not\in S_0:p(x|T_1)>0\},&T_2&=\{y\not\in T_1:p(y|S_1\cup S_0)>0\}\notag\\
&\cdots,&T_n&=\{y\not\in T_{n-1}:p(y|\cup_{k=0}^{n-1} S_k)>0\},\notag\\
S_n&=\{x\not\in S_{n-1}:p(x|\cup_{k=1}^nT_k)>0\},&\cdots&.
\end{align}
Since $\mc{X}$ and $\mc{Y}$ are finite sets, there must exist some $M$ and $N$ such that $S_{M+1}=\emptyset$ and $T_{N+1}=\emptyset$.  Define $\overline{S}=\cup_{k=0}^M S_k$ and $\overline{T}=\cup_{k=1}^N T_k$.  By construction we have $p(\overline{S}|\overline{T})=p(\overline{T}|\overline{S})=1$, and since $J_{XY}(x)=J_{XY}(x')$ we must have $x,x'\in\overline{S}$.  However, again by construction, we can always find a sequence $xy_1x_1y_2x_3\cdots y_n x'$ with $x_k\in\cup_{i=0}^k S_i$ and $y_k\in\cup_{i=1}^k T_i$, and so \[p(xy_1)p(y_1x_1)p(x_1y_2)\cdots p(y_nx')>0.\]
\end{proof}

\bibliographystyle{alphaurl}
\bibliography{QuantumBib}

\begin{thebibliography}{BBCM95}

\bibitem[AC93]{Ahlswede-1993a}
R.~Ahlswede and I.~Csisz\'ar.
\newblock Common randomness in information theory and cryptography. i. secret
  sharing.
\newblock {\em Information Theory, IEEE Transactions on}, 39(4):1121--1132,
  1993.
\newblock \href {http://dx.doi.org/10.1109/18.243431}
  {\path{doi:10.1109/18.243431}}.

\bibitem[BBCM95]{Bennett-1995a}
C.H. Bennett, G.~Brassard, C.~Crepeau, and U.M. Maurer.
\newblock Generalized privacy amplification.
\newblock {\em Information Theory, IEEE Transactions on}, 41(6):1915--1923,
  1995.
\newblock \href {http://dx.doi.org/10.1109/18.476316}
  {\path{doi:10.1109/18.476316}}.

\bibitem[CFH14]{Chitambar-2014d}
Eric Chitambar, Ben Fortescue, and Min-Hsiu Hsieh.
\newblock A classical analog to entanglement reversibility, 2014.
\newblock manuscript in preparation.

\bibitem[CK11]{Csiszar-2011a}
Imre Csisz\'ar and Janos K\"orner.
\newblock {\em Information Theory: Coding Theorems for Discrete Memoryless
  Systems}.
\newblock Cambridge University Press, Cambridge, UK, 2011.

\bibitem[CN00]{Csiszar-2000a}
I.~Csisz\'ar and P.~Narayan.
\newblock Common randomness and secret key generation with a helper.
\newblock {\em Information Theory, IEEE Transactions on}, 46(2):344--366, 2000.
\newblock \href {http://dx.doi.org/10.1109/18.825796}
  {\path{doi:10.1109/18.825796}}.

\bibitem[CRW03]{Christandl-2003a}
M.~Christandl, R.~Renner, and S.~Wolf.
\newblock A property of the intrinsic mutual information.
\newblock In {\em Information Theory, 2003. Proceedings. IEEE International
  Symposium on}, pages 258--258, June 2003.
\newblock \href {http://dx.doi.org/10.1109/ISIT.2003.1228272}
  {\path{doi:10.1109/ISIT.2003.1228272}}.

\bibitem[GA10]{Gohari-2010a}
A.A. Gohari and V.~Anantharam.
\newblock Information-theoretic key agreement of multiple terminals; part i.
\newblock {\em Information Theory, IEEE Transactions on}, 56(8):3973--3996,
  2010.
\newblock \href {http://dx.doi.org/10.1109/TIT.2010.2050832}
  {\path{doi:10.1109/TIT.2010.2050832}}.

\bibitem[GK73]{Gacs-1973a}
P.~G\'{a}cs and J.~K\"{o}rner.
\newblock Common information is far less than mutual information.
\newblock {\em Problems of Control and Information Theory}, 2(2):149, 1973.

\bibitem[Mau93]{Maurer-1993a}
U.M. Maurer.
\newblock Secret key agreement by public discussion from common information.
\newblock {\em Information Theory, IEEE Transactions on}, 39(3):733--742, 1993.
\newblock \href {http://dx.doi.org/10.1109/18.256484}
  {\path{doi:10.1109/18.256484}}.

\bibitem[MW99]{Maurer-1999a}
U.M. Maurer and S.~Wolf.
\newblock Unconditionally secure key agreement and the intrinsic conditional
  information.
\newblock {\em Information Theory, IEEE Transactions on}, 45(2):499--514, 1999.
\newblock \href {http://dx.doi.org/10.1109/18.748999}
  {\path{doi:10.1109/18.748999}}.

\bibitem[OSS14]{Ozols-2014a}
Maris Ozols, Graeme Smith, and John~A. Smolin.
\newblock Bound entangled states with a private key and their classical
  counterpart.
\newblock {\em Phys. Rev. Lett.}, 112:110502, Mar 2014.
\newblock \href {http://dx.doi.org/10.1103/PhysRevLett.112.110502}
  {\path{doi:10.1103/PhysRevLett.112.110502}}.

\bibitem[RW03]{Renner-2003a}
Renato Renner and Stefan Wolf.
\newblock New bounds in secret-key agreement: The gap between formation and
  secrecy extraction.
\newblock In {\em Advances in Cryptology — EUROCRYPT 2003}, volume 2656 of
  {\em Lecture Notes in Computer Science}, pages 562--577. Springer Berlin
  Heidelberg, 2003.
\newblock \href {http://dx.doi.org/10.1007/3-540-39200-9_35}
  {\path{doi:10.1007/3-540-39200-9_35}}.

\end{thebibliography}

\end{document}